\newtheoremstyle{mystyle} 
{16pt} 
{10pt} 
{} 
{} 
{\bfseries} 
{} 
{.7em} 
{\thmname{#1}\thmnumber{ #2}:}
\newtheoremstyle{withname} 
{16pt} 
{10pt} 
{} 
{} 
{\bfseries} 
{} 
{.7em} 
{\thmname{#1}\thmnumber{ #2} (\hspace{-3pt}\thmnote{ #3}):}
\theoremstyle{mystyle}
\newtheorem{thm}{Theorem}[section]
\theoremstyle{withname}
\theoremstyle{mystyle}
\newtheorem{dfn}[thm]{Definition}
\theoremstyle{mystyle}
\theoremstyle{withname}
\theoremstyle{mystyle}
\newtheorem{cor}[thm]{Corollary}
\theoremstyle{mystyle}
\newtheorem{prop}[thm]{Proposition}
\theoremstyle{withname}
\theoremstyle{mystyle}
\theoremstyle{mystyle}
\theoremstyle{mystyle}
\newtheorem{lemma}[thm]{Lemma}
\newcommand{\re}{\mathbb{R}}
\newcommand{\cs}{\mathcal{S}}
\newcommand{\cum}[2]{C\left({#1} \; ;{#2} \right)}
\newcommand{\leb}{\lambda^{Leb}}
\title{A latent trawl process model for extreme values}
\author{\begin{tabular}{lll}
Ragnhild C. Noven & Almut E. D. Veraart & Axel Gandy\\[10pt]
\multicolumn{3}{c}{Department of Mathematics, Imperial College London} 
\end{tabular}}
\begin{document}
\maketitle

\begin{abstract}
This paper presents a new model for characterising temporal dependence in exceedances above a threshold. The model is based on the class of trawl processes, which are stationary, infinitely divisible stochastic processes.  The model for extreme values is constructed by embedding a trawl process in a hierarchical framework, which ensures that the marginal distribution is generalised Pareto, as expected from classical extreme value theory. We also consider a modified version of this model that works with a wider class of generalised Pareto distributions, and has the advantage of separating marginal and temporal dependence properties. The model is illustrated by applications to environmental time series, and it is shown that the model offers considerable flexibility in capturing the dependence structure of extreme value data.  
\end{abstract}
\noindent \emph{Keywords:} Trawl process, peaks over threshold, generalised Pareto distribution, hierarchical model, pairwise likelihood estimation, marginal transformation model, conditional tail dependence coefficient.
\section{Introduction}
Modelling dependencies in extreme value data is a topic of growing importance, with applications in a number of fields such as hydrology \citep{DeHaan1998b}, oceanography \citep{Coles1991}, financial risk management \citep{Ledford2003,EMK1997} and environmental science \citep{Davison2012,Heffernan2004}.

The main contribution of this paper is a new extreme value model that can account for serial dependence in the extremes, which extends the hierarchical setup of \cite{Bortot2013}.
The starting point for the model is the observation that the marginal distribution of exceedances converges to the generalised Pareto distribution (GPD), see \cite{Davison1990}. 
In \cite{Bortot2013}, a decomposition of the GPD is used to construct a hierarchical model for exceedances that preserves this distribution marginally. We adopt this hierarchical structure, and then proceed to introduce a new model by using  properties of so-called \emph{trawl processes}.
The original approach of \cite{Bortot2013} involved using a Markov chain to generate dependence in the exceedances; by using a trawl process instead of a Markov chain we obtain a more flexible dependence structure. Moreover, the trawl process framework provides a unified procedure for generating processes with a given infinitely divisible distribution and autocovariance function, whereas \cite{Bortot2013} consider two different specifications for the latent Markov chain with the same marginal distribution and autocovariance function.  

We also consider a modification of the new model that enables it to be used for any generalised Pareto distribution, thus removing the restriction that was inherited from the original model of \cite{Bortot2013} that  the shape parameter of the GPD distribution has to be positive. This means that our modified model can be used for processes with less heavy-tails, which are often found in environmental applications, as illustrated by an application to air pollution data. The modification also improves the interpretability of the parameters, and appears to make the estimation procedure more efficient.

We remark that our new dynamic model for environmental variables such as precipitation and ozone levels can also be used as a basis for designing suitable hedges in terms of weather derivatives. While weather derivatives on rainfall have been traded in the past, there is also recent interest in setting up derivatives to hedge climate variables.

This paper is structured as follows. 
Section \ref{latent_model} introduces the latent trawl process model for dependent extremes in a hierarchical set-up. Section \ref{application} discusses parameter estimation and inference and  develops a measure for the extremal dependence structure that is adapted to the model.
Section  \ref{empirics} shows examples of applying the model to two different environmental time series (rainfall and air pollution), and Section \ref{conclusion} concludes. Some details for the estimation procedure and the proofs of our theoretical results  are presented in Appendix \ref{ltrawlappendix}.

\section{Latent trawl process model} \label{latent_model}
\subsection{Basic structure} \label{setup}
This subsection introduces the hierarchical structure used for the extreme value model, which is taken from \cite{Bortot2013}.  Throughout the article we work on a probability space $(\Omega,\mathcal{F},P)$.
Consider a discrete-time stochastic process denoted by  $\{Y_{j}\}$, which is  assumed to be strictly stationary. We assume that we observe the process at times $j=1,\dots,k$ for $k\in \mathbb{N}$,  and we are interested in extreme values of $Y_{j}$, meaning that $Y_j>u$ for a fixed threshold $u$.  

In order to focus on the extreme values, we will only consider the values and occurrence times of exceedances. To this end, define the exceedances $X_{j}$ by 
\begin{align} \label{exceedances}
X_j:=\max(Y_{j}-u,0), \quad j=1,\ldots,k.
\end{align}
 
From standard extreme value theory (see e.g.~\cite{Pickands1975,Davison1990}), assuming $\{Y_j\}$ are in the domain of attraction of some extreme value distribution, the conditional exceedances $\{X_j|X_j>0\}$ converge to a generalised Pareto distribution (GPD) for an appropriate sequence of thresholds $u_n \rightarrow \infty$. Based on this result we will assume that the conditional distribution of $X_j$ given $X_j>0$ can be approximated by a GPD, for a sufficiently large threshold $u$. The density of the GPD is written as 
\[f_{\text{GPD}}(x|\alpha,\beta)=\frac{\alpha}{\beta}\left(1+\frac{x}{\beta} \right)_+^{-(\alpha+1)}, \quad x \geq 0, \; \alpha,\beta>0,\]
where $y_+=\max(0,y)$, which is a reparametrisation of the standard density with shape parameter $\xi=1/\alpha$ and scale parameter $\sigma=\beta/\alpha$. 

Following \cite{ReissThomas2007} (see also \cite{Bortot2013}), the GPD can be represented as a mixture of an exponential random variable with Gamma distributed parameter, motivating a hierarchical specification for the exceedance process $\{X_j\}$. In particular, we assume that the distribution of   $X_j$ depends on the value of a latent process $\Lambda$ at time $j$, denoted by $\Lambda_j$. This latent process determines both the probability of observing an exceedance, corresponding to $X_j>0$, and the distribution of the exceedances. 

Specifically, we assume that conditionally on the latent process $\Lambda$ the $X_j$ are independent and 
\begin{align}\label{conddistr}
X_j| (X_j>0, \Lambda_j) \sim \text{Exp}(\Lambda_j).
\end{align}
In order to make the threshold exceedances follow the GPD, we can in principle use any stationary stochastic process $\Lambda$ which has gamma marginal law.  The precise specification of $\Lambda$ will be discussed in Section \ref{latent_trawl}.

Next, following \cite{Bortot2013}, we assume that \begin{align}\label{expsurvival}
P(X_j>0|\Lambda_j)=\exp(-\kappa \Lambda_j),
\end{align}
where the parameter $\kappa>0$ is linked to the proportion of the occurrence of exceedances above the threshold $u$. 
Combining \eqref{conddistr} and \eqref{expsurvival}, we can deduce that 
the conditional density of $X_j$ given $\Lambda_j$ is given by 
\begin{align} \label{basicmodel}
\begin{split}
&f(x_j \, |\lambda_j)=\begin{cases}
1-e^{-\kappa \lambda_j}, \quad  &x_j=0,\\[2pt]
e^{-\kappa \lambda_j} \lambda_j e^{-\lambda_j x_j}, \quad & x_j>0,
\end{cases}
\end{split}
\end{align}
where the density $f$ is defined with respect to the measure  $\mu(dx_j)=\delta_0(dx_j)+dx_j$. This construction shows that conditionally on the value of $\Lambda_j$, the exceedance $X_j$ is generated by a two-stage process: at the first stage $X_j$ is set to zero with probability $1-e^{-\kappa \lambda_j}$; at the second stage the distribution of $X_j$ given $\{X_j>0, \, \Lambda_j=\lambda_j\}$ is exponential with parameter $\lambda_j$. The latent process $\Lambda$ may be interpreted as an inverse intensity, as higher values of $\Lambda$ give a lower probability of exceeding the threshold $u$ and a smaller expected value of exceedances. 

Since we required that the observations $X_j$ are independent for distinct $j$, conditional on the corresponding values of $\Lambda$, the conditional joint density of $(X_1,\ldots,X_k)$ factorises and can be written as: 
\begin{align}  \label{independence}
f(x_1,\ldots,x_k\, | \lambda_1,\ldots,\lambda_k)=\prod_{j=1}^k f(x_j|\lambda_j).
\end{align}
This specification implies that any dependence between observations $X_1,\ldots,X_k$ comes from the dependence between corresponding elements of the latent process $\Lambda$. 

To complete the GPD mixture construction, the latent process is required to have a Gamma marginal law, i.e.
$\Lambda_j \sim \mbox{Gamma}(\alpha,\beta)$, for $ \alpha,\beta>0$,
which implies that the corresponding density is given by $f_{\Lambda_j}(x)=\beta^{\alpha}\Gamma(\alpha)^{-1}x^{\alpha-1}e^{-\beta x}$ for $x>0$, and the characteristic function is given by 
$\mathbb{E}(\exp(iu \Lambda_j)) = \exp(C(u,\Lambda_j))$,  where  $C(u,\Lambda_j)=-\alpha \log(1-iu/\beta)$ denotes the corresponding cumulant function, which is the  \emph{distinguished logarithm} of the  characteristic function, see \citet[p.~33]{Sato1999}. This specification introduces the restriction $\alpha>0$, which means that the model can only capture data belonging to the Fr\'echet distribution class. Section \ref{transformed} presents a modified version of the model that removes this restriction. 

A straightforward computation shows that, when using the above specification, the exceedances $\{X_j: X_j>0\}$ have a GPD$(\alpha,\beta+\kappa)$ marginal law.
Also. the unconditional probability of observing an exceedance is given by
\begin{align} \label{exprob}
P(X_j>0)=E_{\Lambda}[e^{-\kappa \Lambda}]= \left(1+\frac{\kappa}{\beta} \right)^{-\alpha}.
\end{align}
\subsection{Latent trawl process} \label{latent_trawl} 
The previous subsection describes a general hierarchical model setup, using the same structure as in \cite{Bortot2013}. So far, the latent process $\Lambda$ has only been specified as having a Gamma marginal law. Now we depart from the approach used in \cite{Bortot2013}, where the latent process is assumed to be a Markov chain (more specifically a  Gaver and Lewis process (G-LP) or a  Warren process (WP)); rather, we consider a new model where $\Lambda$ is a \emph{trawl process}. In principle, any stationary process with Gamma marginal law could be used in this construction, but we will argue in the following that the class of trawl processes is particularly suited   due to the fact that the serial correlation and the marginal distribution can  be modelled independently of each other in the case of trawl processes. 

The conditional independence assumption of the hierarchical model means that any dependence between observations comes from the latent process, and hence this process should have a flexible dependence structure. This motivates the use of a trawl process, which can capture a wide range of dependence structures, as we will discuss in the following. Using a trawl process also means that the observations $X_{j}$ can be seen as coming from a continuous-time process $(X_t)$, which is useful for statistical applications where there may be missing or irregularly spaced data. 

\cite{Bortot2013} consider two particular classes for the latent Markov chain, the Gaver and Lewis process (G-LP) and the Warren process (WP), and proceed to show that these two classes result in different asymptotic properties of the extremes, even though they have the same autocorrelation function. In contrast, the latent trawl process in our model is specified by its trawl set, which corresponds to a particular autocorrelation function. As will be shown in Section \ref{taildep}, the resulting process is asymptotically independent, where the form of the dependence structure is influenced by the trawl set. 

\subsubsection{Definition and properties of trawl processes}
Let us now define the class of trawl processes and present its key properties. 
  Trawl processes have been introduced by \cite{Barndorff-Nielsen2011a} and have been further developped by 
\cite{Barndorff-Nielsen2014, ShephardYang2016a, ShephardYang2016b, Veraart2016}.
 They are stationary infinitely divisible stochastic processes which are made up of two components: the L\'evy basis and the  trawl set. In order to define these components, we need to introduce some notation first. 
 
 To this end, let $S$ be a Borel set in $\re^2$, with the associated Borel $\sigma$-algebra $\mathcal{S}=\mathcal{B}(S)$ and Lebesgue measure $\leb$. Let $\mathcal{B}_b(S)$ be the subsets of $S$ with finite Lebesgue measure, i.e.
$\mathcal{B}_b(S)=\{A \in \mathcal{S}: \leb(A)<\infty\}$.
The purpose of the next definition is to define what we mean by a \textit{homogeneous L\'evy basis}, which is the source of randomness in the trawl process.

\begin{dfn} 
\begin{enumerate}
\item 
A \textit{random measure on} $(S,\mathcal{B}(S))$ is a collection of $\re$-valued random variables 
$\{M(A): A \in \mathcal{B}_b(S) \}$ 
such that for any sequence $A_1,A_2,\ldots$ of disjoint elements of $\mathcal{B}_b(S)$ with  $\cup_{j=1}^\infty A_j \in \mathcal{B}_b(S)$, we have
$M \left(\bigcup_{j=1}^\infty A_j \right)=\sum_{j=1}^\infty M(A_j)$ a.s..
\item
A random measure $M$ on $(S,\cs)$ is \textit{independently scattered} if for any sequence $A_1,A_2,\ldots$ of disjoint elements of $\mathcal{B}_b(S)$, the random variables $M(A_1),M(A_2),\ldots$ are independent. 
\item A random measure $M$ on $(S,\cs)$ is called \textit{infinitely divisible} if for each $n \in \mathbb{N}$ there exist $n$ independent, identically distributed random measures $Z_1^n,\ldots,Z_n^n$ such that
$M \stackrel{d}{=}Z_1^n+\ldots+Z_n^n$.
In particular, infinite divisibility implies that for any finite collection $A_1,\ldots,A_n$ of elements of $\mathcal{B}_b(S)$, the random vector $(M(A_1),\ldots,M(A_n))$ is infinitely divisible in $\re^n$.
\item A random measure on $(S,\mathcal{S})$ is called \textit{stationary} if for any point $s \in S$ and finite collection $A_1,A_2,\ldots,A_n \in \mathcal{B}_b(S)$ such that $A_i + \bm{s} \subset S$, we have that
$(M(A_1+\bm{s}),M(A_2+\bm{s}),\ldots,M(A_n+\bm{s}))\stackrel{d}{=}(M(A_1),M(A_2),\ldots,M(A_n))$.

\item A \textit{homogeneous L\'evy basis} $L$ on $(S,\cs)$ is a random measure that is independently scattered,  infinitely divisible and stationary. 
\end{enumerate}
\end{dfn}

Let $L$ denote a homogeneous L\'evy basis on $(S,\cs)$. Then the characteristic function satisfies the fundamental relation
\begin{align} \label{lseed0}
E \left[\exp\{iuL(A)\} \right]=\exp\left\{\leb(A) K(u)\right\},  \text{ for } A \in \mathcal{B}_b(S),
\end{align} 
where 
\begin{align}\label{lseed1}
K(u) = iu \mu -\frac{1}{2}u^2\sigma^2 + \int_{\re}\left(e^{iuz}-1-iuz\mathbb{I}_{|z|\leq 1}\right)\nu(dz),
\end{align}
for constants $\mu\in \re, \sigma^2\geq 0$ and a L\'{e}vy measure $\nu$, see e.g.~\cite{Rajput1989,Barndorff-Nielsen2011a}. Since equation \eqref{lseed1} has the form of the cumulant function of an infinitely divisible random variable,  we say that we can associate a \emph{L\'{e}vy seed} denoted by $L'$ with the L\'{e}vy basis $L$ which is defined as the random variable whose law is characterised by  \eqref{lseed1}. We then write $C(u,L')=K(u)$ for the corresponding cumulant function and conclude that
\begin{align} \label{lseed}
E \left[\exp\{iuL(A)\} \right]=\exp\left\{\leb(A)C(u,L')\right\},  \text{ for } A \in \mathcal{B}_b(S).
\end{align} 
This shows that  the law of $L(A)$ is fully determined by the L\'evy seed $L'$ and the Lebesgue measure of the set $A$.

We can now define the class of trawl processes. 
\begin{dfn} \label{trawlproc}
Let $A$ be any set in $\mathcal{B}_b(\re \times \re)$, and define a collection of trawl sets $\{A_t\}$ by shifting $A$ along the $\re$-axis corresponding to the last coordinate, which represents time: 
$A_t=A+(0,t):=\{(a_1,a_{2}+t):(a_1,a_2) \in A\}$.
Let $L$ denote a homogeneous L\'evy basis. 
The trawl process
 $(\Lambda_t)_{t\in \mathbb{R}}$ is then defined by evaluating  the homogeneous L\'{e}vy basis over the trawl set, i.e.~by setting $\Lambda_t=L(A_t)$  for $t\in \mathbb{R}$.
\end{dfn}
The trawl process definition can be written as a stochastic integral, which will become useful for calculations in the following. Specifically, we write 
\[\Lambda_t=\int_{\re \times \re} \mathbb{I}_{A_t}(\xi,s) L(d \xi, ds)= \int_{\re \times \re} \mathbb{I}_{A}(\xi,s-t) L(d \xi, ds),\]
where points in $\re^{2}$ are denoted by $(\xi,s)$ for $\xi \in \re, s \in \re$, so the last component corresponds to the time axis. The  stochastic integral is defined in the sense of \cite{Rajput1989}, see also  \cite{Barndorff-Nielsen2012} for a review.

From the definition of the trawl process, we can immediately deduce that the process is stationary and infinitely divisible and that the characteristic function is given by equation \eqref{lseed} since $\Lambda_t \stackrel{\text{d}}{=}\Lambda_0$. Moreover, the  stochastic integral representation  implies that a trawl process is also a so-called mixed moving-average process; it was shown in \cite{Fuchs2013} that such processes are mixing, so it follows that trawl processes are mixing and ergodic. 

{\bf A slice representation for the finite dimensional distributions:}
Next, we study the finite dimensional distributions of a trawl process and derive what we call a \emph{slice representation} for its characteristic function which will be very useful for simulation and inference purposes later on.

To this end, consider a sequence $0\leq t_1\leq \cdots \leq t_k$ with $k\in \mathbb{N}$ and let us now derive the joint characteristic function of $(\Lambda_{t_1},\ldots,\Lambda_{t_k})$.   
We write $\Lambda_j=\Lambda_{t_j}$ to simplify the exposition and note that  typically we will choose $t_j=j$.
We consider the union $A^{\cup,k }:=\cup_{i=1}^k A_{t_i}$.  Using the inclusion-exclusion principle we construct what we call a \emph{slice} partition  
$\{ S_1, \dots, S_{n_k}\}$ of  $A^{\cup,k }$, where $n_k$ denotes the number of elements in the partition: In addition to $\{ S_1, \dots, S_{n_k}\}$ being a partition of  $A^{\cup,k }$, we require that the partition is such that each trawl $A_{t_k}$ can be written as a union of elements of that partition and that the intersection of any number of trawl sets and trawl set complements is a union of subsets in the partition. For general trawls one would need $n_k=2^{k-1}$, whereas for monotonic trawls this number reduces to $k(k+1)/2$.
E.g.~in the case when $k=2$, a suitable slice partition of $A_{t_1}\cup A_{t_2}$ is given by $\{A_{t_1}\cap A_{t_2}, A_{t_1}\setminus A_{t_2}, A_{t_2}\setminus A_{t_1}\}$. 
 \begin{prop}\label{prop:jointchar}
 For $u_1, \dots, u_k \in \re$, we have (using the notation introduced above) that 
\begin{align*}
E\left(\exp\left(i\sum_{j=1}^ku_j\Lambda_j\right)\right)= \exp\left(\sum_{m=1}^{n_k} \leb(S_m) C(\theta_{m}^+ ; L')\right), \quad \text{ for } u_{m}^+:=\sum_{\substack{1\leq j\leq k:\\  A_{t_{j}}\supset S_m}} u_j.
\end{align*}
 \end{prop}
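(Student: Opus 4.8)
The plan is to reduce the joint characteristic function to a product over the disjoint slices $S_m$ and then invoke the fundamental relation \eqref{lseed}. The three ingredients are the additivity of the random measure $L$, its independent scattering, and the explicit form \eqref{lseed} of the characteristic function of $L$ evaluated on a set of finite Lebesgue measure. Everything else is bookkeeping built on the partition \eqref{eq:Cha10_SP}.

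First I would record the set-theoretic facts underlying \eqref{eq:Cha10_SP}. Directly from the definition \eqref{eq:Cha10_SPSk}, the slices are pairwise disjoint, and for each fixed $j$ one has $A_{t_j}\supset S_m$ exactly when $j$ lies in the index set $\{i_1,\dots,i_m\}$ defining $S_m$: if $j$ is among these indices then $S_m\subset A_{t_j}$, whereas if $j$ is excluded then $S_m$ is disjoint from $A_{t_j}$ by construction. It follows that each trawl set decomposes as the finite disjoint union $A_{t_j}=\bigcup_{m:\,A_{t_j}\supset S_m}S_m$. Since $A\in\mathcal{B}_b(\re\times\re)$ and translations preserve Lebesgue measure, every $A_{t_j}$, and hence every $S_m$, has finite Lebesgue measure, so each $L(S_m)$ is well defined.

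Next I would apply the additivity of $L$ over this disjoint union to obtain $\Lambda_j=L(A_{t_j})=\sum_{m:\,A_{t_j}\supset S_m}L(S_m)$ almost surely. Substituting and interchanging the two finite sums, the linear combination becomes
\begin{align*}
\sum_{j=1}^k u_j\Lambda_j=\sum_{m=1}^{2^k-1}\Bigg(\sum_{j:\,A_{t_j}\supset S_m}u_j\Bigg)L(S_m)=\sum_{m=1}^{2^k-1}u_m^+L(S_m),
\end{align*}
with $u_m^+$ precisely as in the statement. Because the $S_m$ are pairwise disjoint and $L$ is independently scattered, the variables $L(S_1),\dots,L(S_{2^k-1})$ are independent, so the expected exponential factorises; applying \eqref{lseed} to each factor yields
\begin{align*}
E\Bigg[\exp\Bigg(i\sum_{j=1}^k u_j\Lambda_j\Bigg)\Bigg]=\prod_{m=1}^{2^k-1}\exp\{\leb(S_m)C(u_m^+;L')\}=\exp\Bigg\{\sum_{m=1}^{2^k-1}\leb(S_m)C(u_m^+;L')\Bigg\}.
\end{align*}

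The argument is essentially bookkeeping once the partition is fixed, and I do not expect a serious analytic obstacle; the only step requiring genuine care is the set-theoretic one, namely verifying disjointness of the slices and the equivalence $A_{t_j}\supset S_m\iff j\in\{i_1,\dots,i_m\}$, since this is exactly what guarantees that the coefficient of $L(S_m)$ equals $u_m^+$ and that independent scattering may be invoked. I would also note that the right-hand side of the stated identity should carry an outer exponential, and that the argument of $C$ should read $u_m^+$ rather than $\theta_m^+$; the computation above produces precisely this corrected form.
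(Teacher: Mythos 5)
Your proof is correct and follows essentially the same route as the paper's: write $\Lambda_j=L(A_{t_j})=\sum_{m:\,S_m\subset A_{t_j}}L(S_m)$ over the disjoint slices, interchange the finite sums to identify the coefficient $u_m^+$ of $L(S_m)$, factorise the expected exponential by independent scattering, and evaluate each factor via the fundamental relation \eqref{lseed} (the paper cites Proposition \ref{lbasisint}, of which \eqref{lseed} is the needed special case). You are also right about the typos in the stated display --- the correct right-hand side is $\exp\bigl\{\sum_{m=1}^{2^k-1}\leb(S_m)\,C(u_m^+;L')\bigr\}$, with the outer exponential restored and $u_m^+$ in place of $\theta_m^+$ --- which is precisely what the paper's own proof derives.
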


An immediate consequence of Proposition \ref{prop:jointchar} is the following corollary stating the second order properties of a trawl process. 
\begin{cor}
Consider a trawl process with finite second moment. Then for all $t\in \re, h\geq 0$ we have 
$E(\Lambda_t)=\leb(A)E(L')$, $\mathrm{Var}(\Lambda_t) =\leb(A)\mathrm{Var}(L')$, and $\mathrm{Cor}(\Lambda_t,\Lambda_{t+h})=\frac{\leb(A\cap A_h)}{\leb(A)}$.
\end{cor}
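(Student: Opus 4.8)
The plan is to read off both the first- and second-order structure directly from the cumulant function, exploiting the fact that equation \eqref{lseed} and Proposition \ref{prop:jointchar} express the (joint) characteristic functions in exponential form, so that their distinguished logarithms are linear combinations of the seed cumulant function $C(\cdot\,;L')$.

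First I would handle the mean and variance. By \eqref{lseed} the characteristic function of $\Lambda_t = L(A_t)$ is $\exp\{\leb(A_t)\,C(u;L')\}$, and since Lebesgue measure is shift-invariant we have $\leb(A_t)=\leb(A)$. Hence the cumulant function of $\Lambda_t$ is exactly $\leb(A)\,C(u;L')$, so every cumulant of $\Lambda_t$ equals $\leb(A)$ times the corresponding cumulant of the seed $L'$. The finite-second-moment hypothesis guarantees that $C(\cdot\,;L')$ is twice continuously differentiable at the origin, so the first and second cumulants may be extracted by differentiation; reading off the first cumulant gives $E(\Lambda_t)=\leb(A)\,E(L')$ and the second gives $\mathrm{Var}(\Lambda_t)=\leb(A)\,\mathrm{Var}(L')$.

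Next I would compute the covariance using Proposition \ref{prop:jointchar} with $k=2$ and $(t_1,t_2)=(t,t+h)$; by stationarity it suffices to treat $(0,h)$. The three slices are $A_0\setminus A_h$, $A_h\setminus A_0$ and the overlap $A_0\cap A_h$, with associated index-sums $u^+$ equal to $u_1$, $u_2$ and $u_1+u_2$ respectively. Writing $K(u_1,u_2)$ for the joint cumulant function, Proposition \ref{prop:jointchar} gives
\begin{align*}
K(u_1,u_2)=\leb(A_0\setminus A_h)\,C(u_1;L')+\leb(A_h\setminus A_0)\,C(u_2;L')+\leb(A_0\cap A_h)\,C(u_1+u_2;L').
\end{align*}
The covariance is obtained from the mixed second derivative $\partial^2 K/\partial u_1\partial u_2$ at the origin; the first two summands depend on a single argument and drop out, leaving only the overlap term, so that $\mathrm{Cov}(\Lambda_t,\Lambda_{t+h})=\leb(A_0\cap A_h)\,\mathrm{Var}(L')$. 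Using shift-invariance, $\leb(A_0\cap A_h)=\leb(A\cap A_h)$, and dividing by $\mathrm{Var}(\Lambda_t)=\leb(A)\,\mathrm{Var}(L')$ (the two marginal variances being equal by stationarity) yields the stated correlation, the factor $\mathrm{Var}(L')$ cancelling.

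The routine parts are the slice bookkeeping and the elementary Lebesgue-measure identities such as $\leb(A_0\setminus A_h)=\leb(A)-\leb(A\cap A_h)$. The one point that requires care --- and where the finite-second-moment assumption does real work --- is the passage from the cumulant (distinguished-logarithm) function to genuine moments: one must justify differentiating $C(\cdot\,;L')$ at the origin and confirm that the quantities extracted are the true mean, variance and covariance rather than merely formal cumulants. This is standard for infinitely divisible laws with a finite second moment, but it is the only step that is not pure algebra.
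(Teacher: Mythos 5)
Your proposal is correct and matches the paper's intended argument: the corollary is stated there as an immediate consequence of Proposition \ref{prop:jointchar}, and your derivation---reading the marginal cumulants from $\leb(A)\,C(u;L')$ and extracting the covariance from the mixed partial of the three-slice joint cumulant function, where only the overlap term $\leb(A\cap A_h)\,C(u_1+u_2;L')$ survives---is precisely the natural fleshing-out of that remark, equivalent to the paper's slice-variable computation ($\Lambda_0=S_0+S_{0,h}$, $\Lambda_h=S_{0,h}+S_h$ with independent slices) used in the proof of Proposition \ref{acfexceedance}. Your care about differentiability of the cumulant function at the origin under the finite-second-moment hypothesis is also the right (and only nontrivial) analytic point.
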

\subsubsection{Marginal distribution}

In the context of our latent trawl model, we are exclusively  interested in the case of a marginal Gamma law.
Specifically, we fix a  set $A$ in $\mathcal{B}_b(S)$, and let the L\'evy seed have a normalised Gamma distribution, i.e.~  
\begin{align} \label{gmargins}
L' \sim \text{Gamma}\left(\frac{\alpha}{\leb(A)},\beta \right),
\end{align}
then the trawl process defined by $\Lambda_t=L(A_t)$ has a Gamma$(\alpha,\beta)$ distribution. 

Combining the trawl process $\Lambda$ with the hierarchical model presented in Subsection \ref{setup}, we obtain a stochastic process $(X_j)$ with finite-dimensional densities given by 
\begin{align} \label{latentdens}
&f(x_1,\ldots,x_k)= 
\int_{\re_{+}^k} \left( \prod_{j\in I_0}  (1-e^{-\kappa \lambda_j}) \right)
\left( \prod_{j\in I_>}  \lambda_j e^{-\lambda_j(\kappa+x_j)} \right)
\, dF(\lambda_1,\ldots,\lambda_k), 
\end{align}
where $I_0=\{j\in\{1,\dots,k\}:x_j=0\}$ and $I_> =\{j\in\{1,\dots,k\}:x_j>0\}$.  These densities depend on the joint distribution $F$ of $(\Lambda_1,\ldots,\Lambda_k)$, which is fully specified by the trawl set $A$ and the L\'evy seed $L'$, as shown in Proposition \ref{prop:jointchar}.

\subsubsection{Trawl set} \label{exptrawlset}
To complete the definition of the trawl process $\Lambda$ it now remains to specify the trawl set $A$. For our model we use the so-called exponential trawl set; 
this is the trawl obtained by setting $A=\{(\xi,s): s\leq 0, 0\leq \xi\leq d_{exp}(s)\}\subset [0,1] \times (-\infty,0]$,  for $d_{exp}(s)=\exp(\rho s)$, 
for some $\rho>0$. The resulting process is then called the \textit{exponential trawl process}, and hence we obtain a hierarchical model with parameters $(\rho,\alpha,\beta,\kappa)$.

The autocovariance function of the trawl process is given by $\varphi(h)=\leb(A \cap A_h)$Var$(L')$, and using the exponential trawl gives 
$\leb(A)=\rho^{-1}$, $\leb(A \cap A_h)=e^{-\rho h}
\rho^{-1}$.
Combining with \eqref{gmargins} gives Var$(L')=(\alpha \rho)/\beta^2$, resulting in the autocovariance function
\[\varphi(h)=e^{-\rho h} \frac{\alpha}{\beta^2}=e^{-\rho h} \text{Var}(\Lambda_t).\]
Thus the autocorrelation function of the exponential trawl process has the same decay as the trawl function $d_{exp}$, which is a particular property of the exponential trawl. 

We can also consider a general exponential trawl set, which is constructed from linear combinations of basic exponential trawls. Specifically, define the general exponential trawl set of order $p$ to be bounded above by the function
$d_p(x)=\sum_{i=1}^p w_i e^{\rho_i x}$,
with $\sum_i w_i=1$. The latter restriction is necessary to make the parameters identifiable, as any scaling factor in the weights $w_i$ will scale the area of the trawl, and thus be cancelled by the normalisation in \eqref{gmargins}. When using the general exponential trawl set, the resulting trawl process has an autocorrelation function $r(h)$ given by the corresponding linear combination of $e^{-\rho_i h}$. This construction can be seen as a special case of a superposition-type trawl where the decay parameter $\rho$ is randomised, as in \citet[Section 4]{Barndorff-Nielsen2014}; see also \cite{Barndorff-Nielsen2001a} for a similar approach applied to Ornstein-Uhlenbeck-type processes. Other relevant choices  of the trawl set beyond the exponential setting are discussed in \cite{Barndorff-Nielsen2014} and we also remark that the trawl does not need to be restricted to an $\mathbb{R}^2$-setting, but could also be considered in higher dimensions if necessary.

Summarising the above discussion, we have constructed a trawl process $\Lambda$ with a marginal Gamma distribution and an exponentially decaying autocorrelation function. Using such a discretised trawl process  as the latent process in the hierarchical structure results in a discrete-time process $(X_j)$, where the exceedances $\{X_j>0\}$ have a generalised Pareto distribution. Furthermore, the model allows for dependence between observations $\{X_j\}$, which is derived directly from the dependence in the latent trawl process. 

\subsection{Autocovariance structure} \label{autocovariance}
We now consider the mean and the  autocovariance structure of the exceedance process $(X_j)$ from the latent model, which we summarise in the following proposition.

\begin{prop}\label{acfexceedance}
The mean of the exceedance process $(X_j)$ is for all $j\in \mathbb{N}\cup \{0\}$ given by 
\begin{align*}
E[X]:=E[X_j]&=(1+\kappa/\beta)^{-\alpha}(\beta+\kappa)/(\alpha-1), \quad  \alpha>1.
\end{align*}
As $X$ is a stationary process, it has autocovariance function
$\varphi(h)=E[X_0X_{h}]-E^2[X]$, where for $h\in \mathbb{N}$  
\begin{align}\label{Cov}
E[X_0X_h]
&= \int_\kappa^\infty \int_\kappa^\infty \left(1+\frac{u_0}{\beta} \right)^{b_{0\setminus h}} \left(1+\frac{u_0+u_h}{\beta} \right)^{b_{0,h}} \left(1+\frac{u_h}{\beta} \right)^{b_{h\setminus 0}} \, du_0 \, du_h,
\end{align}
where $b_i=-\alpha \leb(B_i)/\leb(A)$ for $i\in \{(0\setminus h), (0,h), (h\setminus 0) \}$ with  $B_{0 \setminus h}=A_0 \setminus A_h,  B_{0,h}=A_0 \cap A_h,  B_{h \setminus 0}=A_h \setminus A_0$. Note that $b_{0\setminus h}=b_{h\setminus 0}$.
\end{prop}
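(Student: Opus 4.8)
The plan is to compute both quantities by conditioning on the latent trawl values, exploiting the conditional independence (\ref{independence}) and then reducing the joint law of the latent process to independent Gamma pieces via the slice representation behind Proposition \ref{prop:jointchar}. The common building block is the conditional mean of a single exceedance. From the conditional density (\ref{basicmodel}), for $\lambda>0$ one has $E[X_t\mid\Lambda_t=\lambda]=\int_0^\infty x\,e^{-\kappa\lambda}\lambda e^{-\lambda x}\,dx=e^{-\kappa\lambda}/\lambda$, recognising the inner integral as $e^{-\kappa\lambda}$ times the mean of an $\mathrm{Exp}(\lambda)$ variable. This single identity drives everything that follows.

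For the mean I would apply the tower property to obtain $E[X]=E_\Lambda[e^{-\kappa\Lambda}/\Lambda]$ with $\Lambda\sim\mathrm{Gamma}(\alpha,\beta)$, and then evaluate the Gamma integral $\tfrac{\beta^\alpha}{\Gamma(\alpha)}\int_0^\infty\lambda^{\alpha-2}e^{-(\beta+\kappa)\lambda}\,d\lambda=\tfrac{\beta^\alpha}{(\alpha-1)(\beta+\kappa)^{\alpha-1}}$, which converges precisely when $\alpha>1$ and rearranges to the stated expression after writing $\beta^\alpha/(\beta+\kappa)^\alpha=(1+\kappa/\beta)^{-\alpha}$. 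Equivalently one can use that $\{X\mid X>0\}\sim\mathrm{GPD}(\alpha,\beta+\kappa)$ has mean $(\beta+\kappa)/(\alpha-1)$ and multiply by $P(X>0)=(1+\kappa/\beta)^{-\alpha}$ from (\ref{exprob}); both routes agree.

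For the cross moment $E[X_0X_h]$ the steps are: (i) condition on $(\Lambda_0,\Lambda_h)$ and use (\ref{independence}) to factor $E[X_0X_h\mid\Lambda_0,\Lambda_h]=(e^{-\kappa\Lambda_0}/\Lambda_0)(e^{-\kappa\Lambda_h}/\Lambda_h)$; (ii) linearise each factor through the identity $e^{-\kappa\lambda}/\lambda=\int_\kappa^\infty e^{-u\lambda}\,du$, turning the product into a double integral of $e^{-u_0\Lambda_0-u_h\Lambda_h}$ over $(\kappa,\infty)^2$; (iii) since the integrand is nonnegative, invoke Tonelli to exchange expectation and integration, so that the problem reduces to evaluating the bivariate Laplace transform $E[e^{-u_0\Lambda_0-u_h\Lambda_h}]$. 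The final and key step is to compute this transform from the slice decomposition: with $B_0=A_0\setminus A_h$, $B_{0,h}=A_0\cap A_h$, $B_h=A_h\setminus A_0$, independent scatteredness of the L\'evy basis gives $\Lambda_0=L(B_0)+L(B_{0,h})$ and $\Lambda_h=L(B_{0,h})+L(B_h)$ with the three summands independent, each $L(B_i)\sim\mathrm{Gamma}(b_i,\beta)$ for $b_i=\alpha\leb(B_i)/\leb(A)$ by (\ref{lseed}) and (\ref{gmargins}). Hence $u_0\Lambda_0+u_h\Lambda_h=u_0L(B_0)+(u_0+u_h)L(B_{0,h})+u_hL(B_h)$, and the transform factorises into the product of the three Gamma Laplace transforms $E[e^{-sL(B_i)}]=(1+s/\beta)^{-b_i}$, which assembles into exactly the integrand of (\ref{Cov}).

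The computations are elementary once this structure is in place, so I expect the main obstacle to be the slice decomposition of $(\Lambda_0,\Lambda_h)$ into independent Gamma summands: this is precisely where Proposition \ref{prop:jointchar} and the independent scatteredness of $L$ do the real work, and it must be set up carefully so that the shared slice $B_{0,h}$ produces the cross term $(u_0+u_h)$. The only remaining subtlety is integrability: the mean needs $\alpha>1$, and the double integral converges under the stated finite-second-moment hypothesis, where one notes that $b_0+b_{0,h}=b_h+b_{0,h}=\alpha$ (since $B_0\cup B_{0,h}=A_0$ and $B_h\cup B_{0,h}=A_h$), so each coordinate tail of the integrand decays like $u^{-\alpha}$. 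The rest is bookkeeping.
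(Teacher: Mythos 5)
Your proposal follows the paper's proof essentially step for step: the paper likewise reduces the cross moment to $E_{\Lambda_0,\Lambda_h}\left[e^{-\kappa(\Lambda_0+\Lambda_h)}/(\Lambda_0\Lambda_h)\right]$, rewrites this as $\int_\kappa^\infty\int_\kappa^\infty E\left[e^{-u_0\Lambda_0-u_h\Lambda_h}\right]du_0\,du_h$, and evaluates the bivariate Laplace transform through the same slice partition $B_0,B_{0,h},B_h$ with independent Gamma slice variables $S_i=L(B_i)$, exactly as you describe. Your direct Gamma-integral evaluation of $E[X]$ is just the unpacked version of the paper's route via $P(X>0)$ and the GPD$(\alpha,\beta+\kappa)$ mean, which you yourself note as the equivalent alternative, so there is no substantive difference.
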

 
The  integral in \eqref{Cov} can be computed numerically to obtain the autocovariance of $X$ for given parameters $(\alpha,\beta,\rho,\kappa)$, where the parameters $b_i$ are functions of $\rho$ and $h$. 


The trawl process separates the parameters controlling the marginal and dependence properties of the model. However, this is not the case when considering the full hierarchical model, as the parameters $\alpha,\beta$ and $\kappa$ in the marginal distribution also influence the autocovariance structure of the process. This is illustrated in Figure \ref{autocorr}, which shows two different autocorrelation functions obtained by varying the parameters $\alpha$ and $\beta$ (the plot is provided in a continuous-time setting solely for illustrative purposes). This conflation of marginal and dependence parameters motivates the model in the following subsection. 

\begin{figure}[tbp]
  \centering
  \includegraphics[width=0.6\textwidth]{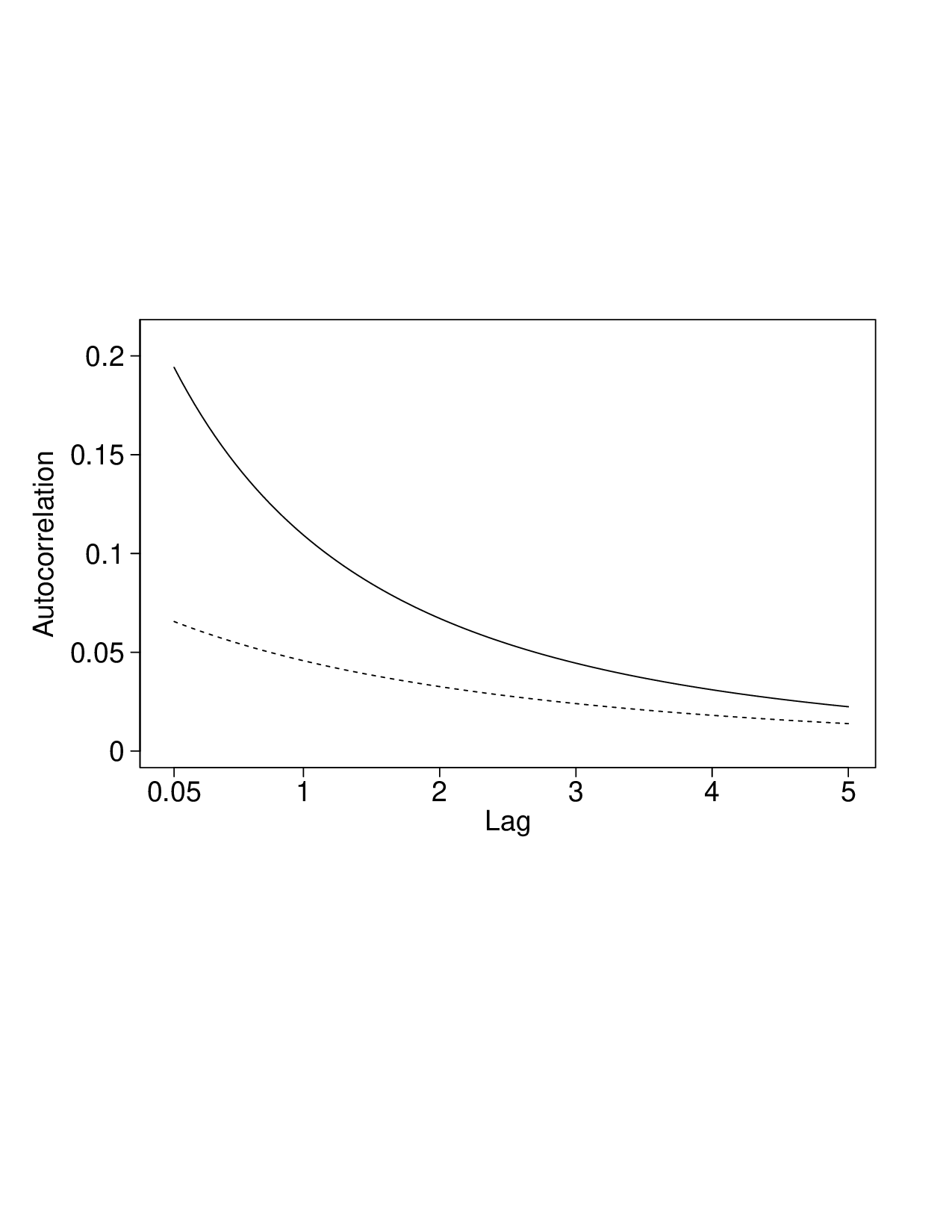}
  \caption[Autocorrelation functions of latent trawl models]{Autocorrelation functions of latent trawl models, the solid line corresponds to a model with $\alpha=4,\beta=4$, the dashed line to $\alpha=9,\beta=1$. Both models set $\rho=0.2$ and $\kappa$ such that the probability of an exceedance equals $0.05$.} 
 \label{autocorr}
\end{figure}   

\subsection{Marginal transformation model} \label{transformed}
This subsection considers a modification of the latent trawl model that has the effect of separating the marginal and dependence properties. This modification allows the model to have generalised Pareto distributions with negative shape parameter. The original restriction to positive values of the shape parameter was highlighted as a potential problem in the conclusion of \cite{Bortot2013}, where a similar modification was suggested but not explored further. 

The model resulting from this modification is also easier to interpret, as the role of each parameter is uniquely defined in terms of controlling either the marginal distribution, probability of exceedance or dependence properties. This also contributes to identifiability of the parameters, in particular we found that the estimation procedure appears to be more efficient for the modified model. 

The modified model is derived from the original model in two steps: the first is to fix the parameters $\alpha,\beta$ of the latent Gamma distribution such that only the parameters associated with the trawl set will influence the trawl process, and thus also the dependence of the exceedances. In the following we will work with $\alpha=\beta=1$, such that the exceedances have marginal law given by GPD$(1,1+\kappa)$. 

The second step is to add an extra layer to the modified model, which uses a standard probability integral transform to give the marginals a GPD$(\xi,\sigma)$ distribution, specifically
\[Z_j=F^{-1}_{\text{GPD}(\xi,\sigma)}(F_{\text{GPD}(1,1+\kappa)}(X_j)):=g(X_j) \quad \xi \in \re, \sigma>0.\]
Note that the above construction implies that  $g(x)=\frac{\sigma}{\xi}\{(1+\frac{x}{1+\kappa})^\xi -1\}$ and $g^{-1}(z)=(1+\kappa)\{(1+z\xi/\sigma )^{1/\xi}-1\}$.
This modified version will be called the marginal transformation (MT) model, it has parameters $(\rho,\kappa,\xi,\sigma)$, where $\xi=1/\alpha, \; \sigma=\beta/\alpha$, and 
conditional density given by 
\begin{align*}
f(z_j|\lambda_j)&=
\left\{ \begin{array}{ll}
1-\exp\{-\kappa\lambda_j\}, & z_j=0, \\[6pt] 
 J(z_j) \, \lambda_j \exp\{-\lambda_j(\kappa+g^{-1}(z_j))\},& z_j>0 ,\\[6pt]
\end{array}  \right.
\text{ where } J(z_j)&=\frac{f_{\text{GPD}(\xi,\sigma)}(z_j)}{f_{\text{GPD}(1,1+\kappa)}(g^{-1}(z_j))},
\end{align*}
with respect to the measure $\mu(dz_j)=\delta_0(dz_j)+dz_j$.  This follows by noting that the transformation $g$ maps the event $\{X=0\}$ to $\{Z=0\}$, thus leaving the atom at zero unchanged, whereas the transformation of the continuous part on $\{X>0\}$ introduces a standard Jacobian term. 

 Now let 
 $I_0=\{j\in\{1,\dots,k\}: z_j=0\}$ and $I_>=\{j\in\{1,\dots,k\}: z_j>0\}$. Then the 
finite-dimensional densities of the MT model can be represented  by
\begin{multline*}
f(z_1,\ldots,z_k)
\\
=
\int_{\re_{+}^k} \left( \prod_{j\in I_0} \, \left(1-\exp\{-\kappa\lambda_j\}\right) \right) \left(\prod_{j\in I_>}  \, J(z_j) \, \lambda_j \exp\{-\lambda_j(\kappa+g^{-1}(z_j))\} \right) \, dF(\lambda_1,\ldots,\lambda_k).
\end{multline*}
When using the MT model in an application, we note that the empirical observations of the exceedances will be described by the $(Z_j)$ and not by the $(X_j)$ as in the earlier model specification.

\section{Model fitting and evaluation} \label{application}

\subsection{Pairwise likelihood} \label{pairlik}
We now consider parameter estimation for the latent trawl model described above. 
The parameter vector of interest is denoted by  $\theta=(\rho,\kappa,\xi,\sigma)^{\top}\in \Theta$, where $\Theta\subset \re^4$ denotes the parameter space.

Given observations $\{Y_j\}$ from a stationary time series, we transform them as in Section \ref{setup} to get exceedances $X_j:=\max(Y_{j}-u,0), \, j=1,\ldots,k$. We also assume there are $l$ positive observations $X_{p_1},\ldots,X_{p_l}$, and $m$ observations $X_{q_1},\ldots,X_{q_m}$ taking the value zero, which we will call exceedances and non-exceedances, respectively. Clearly, $l+m=k$.

The likelihood of the observations $\{X_j\}$ under the original latent trawl model now follows from \eqref{latentdens}, and can be written as 
\begin{align*} 
f(x_1,\ldots,x_k)&=\int_{\re^k_+} \prod_{r=1}^l \lambda_{p_r} \exp\{-(x_{p_r}+\kappa) \lambda_{p_r}\} \prod_{s=1}^m \left(1-\exp\{-\kappa \lambda_{q_s}\}\right) \, dF(\lambda_1,\ldots,\lambda_k), 
\end{align*}
where $F$ is the joint density function of the trawl process observations $\Lambda_1,\ldots,\Lambda_k$. The integrand above can be expanded to obtain a sum involving $2^m$ number of terms, by first defining $\mathcal{S}_t$ as the collection of subsets of $\{q_1,\ldots,q_m\}$ of size $t$, and letting $u_{r}=x_{p_r}+\kappa$, 
to give
\begin{align*}
\sum_{t=0}^m (-1)^t \sum_{\pi_t \in \mathcal{S}_t} \prod_{r=1}^l \lambda_{p_r} \exp\left\{- u_{r}\lambda_{p_r}\right\} \prod_{s_j \in \pi_t} \exp\left\{- \kappa \lambda_{s_j} \right\},
\end{align*}
with the convention $\prod_{s_j \in \pi_0} (\ldots) =1$. This can be rewritten as
\begin{align*}
\sum_{t=0}^m \sum_{\pi_t \in \mathcal{S}_t}(-1)^{t+l} \frac{\partial}{\partial u_{1}}\ldots\frac{\partial}{\partial u_{l}} \exp\left\{-\sum_{r=1}^l u_{r} \lambda_{p_r} -\sum_{s_j \in \pi_t} \kappa \lambda_{s_j}\right\}.
\end{align*}
Using this expression and exchanging integrals and partial derivatives we see that the full likelihood reduces to
\begin{align*}
&\sum_{t=0}^m \sum_{\pi_t \in \mathcal{S}_t}(-1)^{t+l} \, \frac{\partial}{\partial u_{1}}\ldots\frac{\partial}{\partial u_{l}} \int_{\re^k_+}  \; \exp\left\{-\sum_{r=1}^l u_{r} \lambda_{p_r} -\sum_{s_j \in \pi_t} \kappa \lambda_{s_j}\right\}  \, dF(\lambda_1,\ldots,\lambda_k)\\
=&\sum_{t=0}^m \sum_{\pi_t \in \mathcal{S}_t}(-1)^{t+l} \, \frac{\partial}{\partial u_{1}}\ldots\frac{\partial}{\partial u_{l}} \; E\left[ \exp\left\{-\sum_{r=1}^{l} u_{r} \Lambda_{p_r} -\sum_{s_j \in \pi_t} \kappa \Lambda_{s_j}\right\} \right],
\end{align*}
where the expectation is with respect to the corresponding variables $\{\Lambda_{p_r}\}$ and $\{\Lambda_{s_j}\}$ determined by $\pi_t$. 

The expected values in these terms are just joint Laplace transforms, which can be derived from the joint characteristic functions given in Proposition \ref{prop:jointchar}, which involve the parameters of the L\'evy seed $L$ and trawl intersection areas; hence the complete likelihood reduces to a sum of partial derivatives of Laplace transforms. 

The likelihood as given above is not easy to compute in practice, for two reasons. First, it would require multiple numerical partial derivatives to be performed. Second, the number of non-exceedances $m$ is usually close to the number of observations $k$; this is because the latent model is defined to have a GP distribution, and to justify this assumption we need to consider a sufficiently high threshold for exceedances, often given by a large percentile of the observations. Now the two first sums in the likelihood above have $2^m$ terms in total, and hence the likelihood becomes computationally intractable for any reasonable sample size $k$. 

Because of the computational issues associated with the sample size, we consider using a pairwise likelihood approach for model fitting, which is a particular kind of composite likelihood \citep{Varin2008,Cox2004,Varin2011a}. As stated in \cite{Varin2008}, composite likelihood estimators have good properties when the data can be seen as consisting of roughly independent blocks, i.e.~the autocorrelation function decays sufficiently fast. Thus using the pairwise likelihood should provide reasonable performance for the latent trawl model with an exponential trawl set. 
 
Given observations $x_1,\ldots,x_k$, the pairwise likelihood $f_{PL}$ for the parameter vector $\theta=(\rho,\kappa,\xi,\sigma)^{\top}\in \Theta\subset \re^4$ takes the form
\[f^\triangle_{PL}(\theta|x_1,\ldots,x_k)=\prod_{i=1}^{k-1} \prod_{j=i+1}^{\min(i+\triangle,k)} f(x_i,x_j),\]
where $f(\cdot)$ is the original bivariate density function and $\triangle$ denotes the maximum separation between observations. For the latent trawl model, each pairwise likelihood term $f(x_i,x_j)$ involves at most 4 terms in the sum above, and so these terms can be evaluated explicitly in terms of the parameters of the trawl process. There are four different cases, as each of $x_i$ and $x_j$ can be an exceedance or not; the explicit forms of $f(x_i,x_j)$ are given in Appendix \ref{AppendixLike}. 
We denote by 
\begin{align*}
\widehat \theta = \arg \max_{\theta}f^\triangle_{PL}(\theta|x_1,\ldots,x_k)
\end{align*}
the maximum pairwise likelihood estimator.

We note that according to \cite{Cox2004}, the pairwise likelihood estimator is unbiased and asymptotically normal under the usual regularity conditions. 
When looking at the asymptotic theory  in this context, we assume that we have fixed the threshold when computing the relevant exceedances. We are not allowing for a double asymptotic setting  where the threshold is increasing at the same time as the number of observation. A more detailed investigation of such a double asymptotic is beyond the scope of this article. 

\subsection{Conditional tail dependence coefficient} \label{taildep}
We now consider the extremal dependence structure of our model. A common measure of dependence at high levels is the extremal index \citep{Leadbetteretal1983}, which can be characterised as
\[\theta=\lim_{n \rightarrow \infty} P(X_i \leq u_n, 2 \leq i \leq l_n|X_1>u_n),\]
where $u_n$ is an increasing sequence of thresholds and $l_n=o(n)$ (\cite{ancona2000comparison}, following \cite{OBrien}). 

The extremal index $\theta$ essentially describes the dependence across blocks of observations whose length tends to infinity; it can also be defined as the reciprocal mean cluster length, where a cluster is a collection of exceedances in a block. Thus to estimate the extremal index one has to consider very high-dimensional joint distributions, which makes it analytically intractable in many cases, in particular for our latent trawl model. For the applications in the following subsections we will instead consider simulation-based estimates of the extremal index.  

There are also measures of extremal dependence that work on a shorter range of observations than the extremal index: \cite{Coles1999} quantify the dependence between the extreme values of two random variables $X_1,X_2$ in the upper tail-dependence coefficient, given by \footnote{this quantity is sometimes denoted $\lambda_U$, and should not be confused with the ``coefficient of tail dependence'' defined in \cite{Ledford1996}.} 
\[\chi = \lim_{u \rightarrow 1} P(F(X_2)>u | F(X_1)>u),\]
where $F$ is the common marginal distribution of $X_1,X_2$. In other words, $\chi$ gives the limiting probability of  $X_2$ exceeding the threshold $u$ given an exceedance $X_1$, with both variables on a uniform scale. When $X_1,X_2$ come from a stationary time series, $\chi$ can be seen as the probability of observing consecutive exceedances given a single exceedance. To get a broader characterisation of the extremal dependence one can also consider the complete function 
\[\chi(u_1,u_2)=P(F(X_2)>u_2 | F(X_1)>u_1),\]
defined on $[0,1]^2$, where $\chi(u_1,0)=1$ and $\chi(0,u_2)=1-u_2$. 

We would like to use the function $\chi(u_1,u_2)$ and the limiting measure $\chi$ to evaluate the dependence between two observations $X_1,X_2$ from the latent trawl model as a function of the lag $t_2-t_1$. However, the above definition is based on the assumption that $X_1,X_2$ are continuously-valued random variables with distribution function $F$ defined on their entire range, and this does not fit with the exceedance framework where the limiting distribution can only be assumed to hold above a threshold. Some extreme value models  (see e.g.~\citet{Ledford1996}) do not consider exceedances separately, but specify the same probability density function $f$ for the whole range of observations, and then treat observations below the threshold $u$ as censored, so they have probability
$\int_0^u f(s) ds$.  
Our model differs in that it models the occurrence of exceedances explicitly, resulting in a distribution for exceedances $X$ only with an atom at zero, and hence the standard definition of $\chi$ cannot be applied directly. However, we can construct an analogue to the tail dependence coefficient by conditioning on both exceedances being positive, but care must be taken to ensure that the resulting measure is uniform on the $u_2$ margin, as shown in the following definition. 
\begin{dfn}
Consider the latent model as defined in Section \ref{setup}
and set 
\begin{align}\label{F2e_1}
F_{2e}(x)=P(X_0 \leq x\, | X_0>0, X_h>0)=P(X_h \leq x\, | X_0>0, X_h>0), \quad \text{ for } h\in \mathbb{N}.
\end{align}
The \textit{conditional tail dependence function} $\varphi$
is defined as 
\begin{align*}
\varphi(h,u_1,u_2):= P(F_{2e}(X_h)>u_2 \, | F_{2e}(X_0)>u_1, X_0>0,X_h>0), \text{ for } 0\leq u_1, u_2 \leq 1,
\end{align*}
 and the  \textit{conditional tail dependence coefficient} is defined as 
\begin{align*}
\varphi(h):=&\lim_{u\uparrow 1} \varphi(h,u,u).
\end{align*}
\end{dfn}
Note that we show in Lemma \ref{Fe2Lemma} in Appendix \ref{AProofs} that the identity \eqref{F2e_1} holds.

The conditional tail dependence function  can be calculated explicitly in terms of the parameters of the latent trawl model.
 Specifically, we have the following three key results, which are proved in the appendix.
\begin{prop}\label{propcondtail}
Let $h\in \mathbb{N}$ and set $B_{0\setminus h}=A_0 \setminus A_h,  B_{0,h}=A_0 \cap A_h, B_{h \setminus 0}=A_h \setminus A_0$ and $b_i=-\alpha \leb(B_i)/\leb(A)$ for $i\in \{(0\setminus h), (0,h), (h\setminus 0) \}$.
The conditional tail dependence function for $X_0$ and $X_h$ in the latent trawl model is given by 
\begin{align*}
\varphi(h,u_1,u_2)=\left(1+\frac{F_{2e}^{-1}(u_2)}{\beta+2\kappa+F_{2e}^{-1}(u_1)}\right)^{b_{0,h}}\left(1+\frac{F_{2e}^{-1}(u_2)}{\beta+\kappa}\right)^{b_{h\setminus 0}}, \quad \text{ for } 0\leq u_1, u_2 \leq 1.
\end{align*}
\end{prop}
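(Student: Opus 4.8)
The plan is to strip away the quantile transform, reduce $\varphi(h,u_1,u_2)$ to a ratio of joint tail probabilities of $(X_0,X_h)$, and then evaluate those directly from the latent structure. Since the conditional law of $X_0$ given $\{X_0>0,X_h>0\}$ (and likewise that of $X_h$, by the identity \eqref{F2e_1}) is continuous and strictly increasing on $(0,\infty)$, writing $x_i:=F_{2e}^{-1}(u_i)$ for $i=1,2$ gives the a.s.\ equivalences $\{F_{2e}(X_0)>u_1\}=\{X_0>x_1\}$ and $\{F_{2e}(X_h)>u_2\}=\{X_h>x_2\}$, with $x_1,x_2>0$ whenever $u_1,u_2\in(0,1)$. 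Because $x_1,x_2>0$ already force $X_0>0$ and $X_h>0$, the three conditioning events collapse and
\begin{align*}
\varphi(h,u_1,u_2)=\frac{P(X_0>x_1,\,X_h>x_2)}{P(X_0>x_1,\,X_h>0)}.
\end{align*}

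Next I would compute each probability by conditioning on $(\Lambda_0,\Lambda_h)$. Integrating the conditional density \eqref{basicmodel} yields $P(X_t>x\mid\Lambda_t=\lambda)=e^{-\lambda(\kappa+x)}$ for $x>0$ and $P(X_t>0\mid\Lambda_t=\lambda)=e^{-\kappa\lambda}$, so the conditional independence \eqref{independence} turns each joint tail probability into a joint Laplace transform of the latent pair:
\begin{align*}
P(X_0>x_1,\,X_h>x_2)&=E\!\left[e^{-\Lambda_0(\kappa+x_1)}e^{-\Lambda_h(\kappa+x_2)}\right],\\
P(X_0>x_1,\,X_h>0)&=E\!\left[e^{-\Lambda_0(\kappa+x_1)}e^{-\kappa\Lambda_h}\right].
\end{align*}

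To evaluate these I would invoke the slice representation behind Proposition \ref{prop:jointchar}: the set $A_0\cup A_h$ decomposes into the disjoint slices $B_0,B_{0,h},B_h$, so that $\Lambda_0=L(B_0)+L(B_{0,h})$ and $\Lambda_h=L(B_{0,h})+L(B_h)$. Independent scattering of $L$ makes $L(B_0),L(B_{0,h}),L(B_h)$ independent, and by the normalisation \eqref{gmargins} each $L(B_i)$ is $\mathrm{Gamma}(b_i,\beta)$, hence $E[e^{-sL(B_i)}]=(1+s/\beta)^{-b_i}$. Reading the slice decomposition of \eqref{lseed} with $iu_j$ replaced by $-s_j$ (the joint Laplace transform, which converges since all arguments are nonnegative) gives $E[e^{-\Lambda_0 s_0-\Lambda_h s_h}]=(1+s_0/\beta)^{-b_0}(1+(s_0+s_h)/\beta)^{-b_{0,h}}(1+s_h/\beta)^{-b_h}$. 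Substituting $(s_0,s_h)=(\kappa+x_1,\kappa+x_2)$ in the numerator and $(s_0,s_h)=(\kappa+x_1,\kappa)$ in the denominator, taking the ratio, cancelling the common factor $(1+(\kappa+x_1)/\beta)^{-b_0}$, and rewriting the two surviving ratios as $1+x_2/(\beta+2\kappa+x_1)$ and $1+x_2/(\beta+\kappa)$, then resubstituting $x_i=F_{2e}^{-1}(u_i)$, yields exactly the stated formula.

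The computation itself is routine Gamma–Laplace-transform algebra; the delicate points are the reduction in the first step—justifying that $\{F_{2e}(X_0)>u_1\}$, $\{X_0>0\}$, $\{X_h>0\}$ collapse to $\{X_0>x_1\}$, together with the boundary cases $u_i\in\{0,1\}$ handled by continuity of $F_{2e}$—and correctly tracking which slice survives the cancellation. I expect the main obstacle to be purely bookkeeping: assigning the arguments $\kappa+x_1$, $\kappa+x_2$, $\kappa$ to the correct slices so that the shared slice $B_{0,h}$ picks up the sum $s_0+s_h$ while the $B_0$ factor, common to numerator and denominator through the shared event $\{X_0>x_1\}$, cancels and leaves only the $b_{0,h}$ and $b_h$ contributions.
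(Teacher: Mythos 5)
Your proposal is correct and follows essentially the same route as the paper's proof: reduce to a ratio of joint tail probabilities, condition on $(\Lambda_0,\Lambda_h)$, decompose via the independent Gamma slices over $B_0$, $B_{0,h}$, $B_h$, evaluate the joint Laplace transform, and cancel the common $b_0$ factor. The only cosmetic differences are that the paper obtains the denominator by setting $u_2=0$ in the joint expression (using $F_{2e}^{-1}(0)=0$) rather than computing $P(X_0>x_1,\,X_h>0)$ directly, and that your explicit justification of the event-collapse step via continuity and strict monotonicity of $F_{2e}$ is a detail the paper leaves implicit.
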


\begin{prop}\label{scalingofindex}
The conditional tail dependence function satisfies the same marginal scaling as the original tail dependence index $\chi$, namely $\varphi(h,u_1,0)=1,\varphi(h,0,u_2)=1-u_2$ for any $0\leq u_1, u_2 \leq 1$.
\end{prop}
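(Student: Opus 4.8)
The plan is to read off both boundary identities directly from the closed-form expression for $\varphi(h,u_1,u_2)$ in Proposition \ref{propcondtail}, after recording one elementary fact about the conditional marginal $F_{2e}$. Because we condition on $\{X_0>0,X_h>0\}$, the law described by $F_{2e}$ carries no atom at the origin and is supported on $(0,\infty)$; consequently $F_{2e}(0)=0$, and since $F_{2e}$ is continuous and strictly increasing on its support, its generalised inverse satisfies $F_{2e}^{-1}(0)=0$. This single observation drives both evaluations.

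For the first identity I would set $u_2=0$ in the formula of Proposition \ref{propcondtail}. Then $F_{2e}^{-1}(u_2)=F_{2e}^{-1}(0)=0$, so both factors collapse, namely
\[
\left(1+\frac{F_{2e}^{-1}(u_2)}{\beta+2\kappa+F_{2e}^{-1}(u_1)}\right)^{-b_{0,h}}=1,\qquad \left(1+\frac{F_{2e}^{-1}(u_2)}{\beta+\kappa}\right)^{-b_h}=1,
\]
for every $u_1\in[0,1]$. Hence $\varphi(h,u_1,0)=1$, matching the boundary value of the ordinary tail dependence function.

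For the second identity I would instead set $u_1=0$, so that $F_{2e}^{-1}(u_1)=0$ in the first denominator and
\[
\varphi(h,0,u_2)=\left(1+\frac{F_{2e}^{-1}(u_2)}{\beta+2\kappa}\right)^{-b_{0,h}}\left(1+\frac{F_{2e}^{-1}(u_2)}{\beta+\kappa}\right)^{-b_h}.
\]
The key step is to recognise this product, with $x=F_{2e}^{-1}(u_2)$, as the survival function $1-F_{2e}(x)$, so that $\varphi(h,0,u_2)=1-F_{2e}(F_{2e}^{-1}(u_2))=1-u_2$. To obtain $1-F_{2e}$ explicitly I would decompose $\Lambda_0,\Lambda_h$ over the slices $B_0,B_{0,h},B_h$ as in Proposition \ref{prop:jointchar}, so that $L(B_0),L(B_{0,h}),L(B_h)$ are independent Gamma variables with shapes $b_0,b_{0,h},b_h$ and common rate $\beta$. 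Conditional independence of the hierarchical model then gives
\[
P(X_h>x,X_0>0,X_h>0)=E\left[e^{-\kappa\Lambda_0-(x+\kappa)\Lambda_h}\right]=\left(1+\tfrac{\kappa}{\beta}\right)^{-b_0}\left(1+\tfrac{x+2\kappa}{\beta}\right)^{-b_{0,h}}\left(1+\tfrac{x+\kappa}{\beta}\right)^{-b_h},
\]
and dividing by $P(X_0>0,X_h>0)$ (the same expression at $x=0$) cancels the $b_0$ factor and leaves exactly $(1+x/(\beta+2\kappa))^{-b_{0,h}}(1+x/(\beta+\kappa))^{-b_h}$, which is $1-F_{2e}(x)$.

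The only delicate point is the bookkeeping in this last computation: the shape pairing $(b_{0,h},b_h)$ that appears in $\varphi$ is reproduced only if one computes the survival of $X_h$, whereas the survival of $X_0$ would produce $(b_0,b_{0,h})$ instead. These two are reconciled by the symmetry \eqref{F2e_1}, proved in Lemma \ref{Fe2Lemma}, which guarantees that both computations define the same $F_{2e}$; invoking it legitimises using the $X_h$ representation. Everything else reduces to the substitutions $F_{2e}^{-1}(0)=0$ and $x=F_{2e}^{-1}(u_2)$, so once the survival function is in hand the two identities follow immediately.
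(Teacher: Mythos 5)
Your treatment of the first identity coincides with the paper's: set $u_2=0$, use $F_{2e}^{-1}(0)=0$, and read $\varphi(h,u_1,0)=1$ off the formula in Proposition \ref{propcondtail}. For the second identity you take a genuinely different route. The paper compares $\varphi(h,0,F_{2e}(x))$ with the expression \eqref{F2e} for $1-F_{2e}(x)$, which is derived from the $X_0$-margin and therefore carries the exponent pair $(b_{0,h},b_0)$; the two expressions differ in the exponent $b_h$ versus $b_0$, and the paper closes that gap by proving $b_0=b_h$ directly from the geometry of the trawl. You instead recompute the conditional survival function on the $X_h$-margin via the slice decomposition, and your algebra is correct: $P(X_h>x,X_0>0,X_h>0)=E[e^{-\kappa\Lambda_0-(x+\kappa)\Lambda_h}]$ factorises over the independent Gamma slices, the $b_0$-factor cancels exactly in the ratio with $P(X_0>0,X_h>0)$, and the surviving exponents $(b_{0,h},b_h)$ match those in $\varphi$ with no comparison step needed. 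What your route buys is that the matching is automatic; what it costs is the appeal to Lemma \ref{Fe2Lemma} to identify this $X_h$-survival with the $F_{2e}$ appearing in Proposition \ref{propcondtail}, which the paper derives from the $X_0$-margin.

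That appeal is the one genuine weakness. In the paper, the proof of Lemma \ref{Fe2Lemma} itself uses $b_0=b_h$ ``as we will show in the proof of Proposition \ref{scalingofindex} below'' --- that is, the forward reference is discharged inside the very proof you are replacing. If your argument were substituted for the paper's, the fact $b_0=b_h$ would never be established anywhere, and your invocation of Lemma \ref{Fe2Lemma} would be circular. The repair is one line, and it is exactly the paper's key observation: since $A_h$ is a translate of $A_0$, we have $\leb(A_0)=\leb(A_h)$, hence $\leb(A_0\setminus A_h)=\leb(A_0)-\leb(A_0\cap A_h)=\leb(A_h)-\leb(A_0\cap A_h)=\leb(A_h\setminus A_0)$, so $b_0=b_h$. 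Insert that observation before citing the lemma and your proof is complete and self-contained.
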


\begin{thm}\label{thm_phi}
For the original latent trawl model, we have that 
$\phi(h)=0$ for any $h\in \mathbb{N}$; so according to this measure the model is asymptotically tail independent. 
\end{thm}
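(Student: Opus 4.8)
The plan is to read off $\varphi(h)$ directly from the closed form for $\varphi(h,u_1,u_2)$ supplied by Proposition \ref{propcondtail}, specialising to the diagonal $u_1=u_2=u$ and letting $u\uparrow 1$. Writing $y:=F_{2e}^{-1}(u)$, I would first record
\[
\varphi(h,u,u)=\left(1+\frac{y}{\beta+2\kappa+y}\right)^{-b_{0,h}}\left(1+\frac{y}{\beta+\kappa}\right)^{-b_h},
\]
and observe that $y\to\infty$ as $u\uparrow 1$. This last point needs only that the conditional exceedance law $F_{2e}$ has no finite upper endpoint, i.e.\ $F_{2e}(x)<1$ for every finite $x\geq 0$; this holds because the exceedances are of generalised Pareto type with $\alpha,\beta>0$, so conditioning on both coordinates being positive preserves the heavy right tail and leaves the support unbounded.

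Next I would treat the two factors separately. In the first factor the ratio $y/(\beta+2\kappa+y)\to 1$, so its base tends to $2$ and the factor converges to the finite constant $2^{-b_{0,h}}\in(0,1]$. In the second factor the base $1+y/(\beta+\kappa)$ diverges to $+\infty$, so the factor tends to $0$ provided the exponent $b_h$ is strictly positive. Establishing $b_h>0$ is the single place where the geometry of the trawl enters: for a strictly positive lag $h\in\mathbb{N}$ the shifted set $A_h$ differs from $A_0$, and for the exponential trawl $\leb(B_h)=\leb(A_h\setminus A_0)=(1-e^{-\rho h})/\rho>0$, whence $b_h=\alpha\leb(B_h)/\leb(A)>0$. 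Multiplying the two limits gives $\varphi(h)=2^{-b_{0,h}}\cdot 0=0$.

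The computation is short, and I expect no real analytic obstacle: the product of a convergent bounded factor and a vanishing factor is itself convergent, so no indeterminate form arises and no delicate interchange of limits is required. The only two things demanding care are the justifications that $y\to\infty$ (equivalently that $F_{2e}$ has no finite upper endpoint) and that $b_h>0$, since it is precisely the strict positivity of $b_h$ that makes the second factor vanish and thereby dominate the bounded factor $2^{-b_{0,h}}$. The resulting identity $\varphi(h)=0$ at every positive lag is exactly the assertion that the original latent trawl model is asymptotically independent in the sense of the conditional tail dependence coefficient.
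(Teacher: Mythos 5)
Your proposal is correct and follows essentially the same route as the paper's proof: specialise Proposition \ref{propcondtail} to the diagonal, note $F_{2e}^{-1}(u)\to\infty$ as $u\uparrow 1$, and conclude that the first factor tends to the finite constant $2^{-b_{0,h}}$ while the second vanishes. Your two added justifications --- that $F_{2e}$ has no finite upper endpoint (visible from the closed form \eqref{F2e}) and that $b_h=\alpha(1-e^{-\rho h})>0$ for the exponential trawl, which the paper leaves implicit --- are correct and tighten the argument without changing it.
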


Note that the speed of decay of the conditional tail dependence function to zero increases with the value of $b_{h\setminus 0}$; in other words, the larger the intersection of the trawl sets given by $X_0,X_h$, the slower the model decays to independence as the threshold increases. 


It has been pointed out by \cite{Coles1999} that the class of asymptotically independent distributions is of fundamental importance in multivariate extreme value theory, see also
\cite{Ledford1996,Ledford1997,BruunTawn1998,Bortot1998}. 
\section{Empirical examples}\label{empirics}
\subsection{Heathrow data}
In this subsection we use the latent trawl model to analyse a data set consisting of daily accumulated rainfall amounts at Heathrow (UK) over the years 1949-2012, provided by the \cite{MIDAS}. We set the threshold $u$ at the 95 percentile of the original data ($8.9$mm), resulting in the time series of exceedance values shown in Figure \ref{hts}. 

We fitted both the latent trawl model described in Section \ref{latent_trawl} and the latent Markov chain model of \cite{Bortot2013}, using both the G-LP and Warren process for the Markov chain. The parameter estimation was done using pairwise likelihood as described in Section \ref{pairlik}, with separation parameter $\triangle=4$ (motivated by our simulation experiments); the resulting estimates are shown in Table \ref{hresults}. We see that the marginal parameters are similar across the models, which is reasonable as the models all have a marginal GPD$(\alpha,\beta+\kappa)$ distribution and $\kappa$ controls the marginal exceedance probability. 

The parameter $\rho$ controls the latent dependence structure of the models. For the latent trawl process it is the decay parameter of the exponential trawl function, whereas for the latent Markov chains it enters in the autocorrelation function $\phi(h)=\rho^h=\exp(h \log(\rho))$. Using this relation we see from the fitted values of $\rho$ that all three latent processes have similar autocorrelation functions, where the G-LP and WP processes have the fastest and slowest decay, respectively. 

\begin{figure}[tbp]
  \centering
  \includegraphics[width=0.55\textwidth]{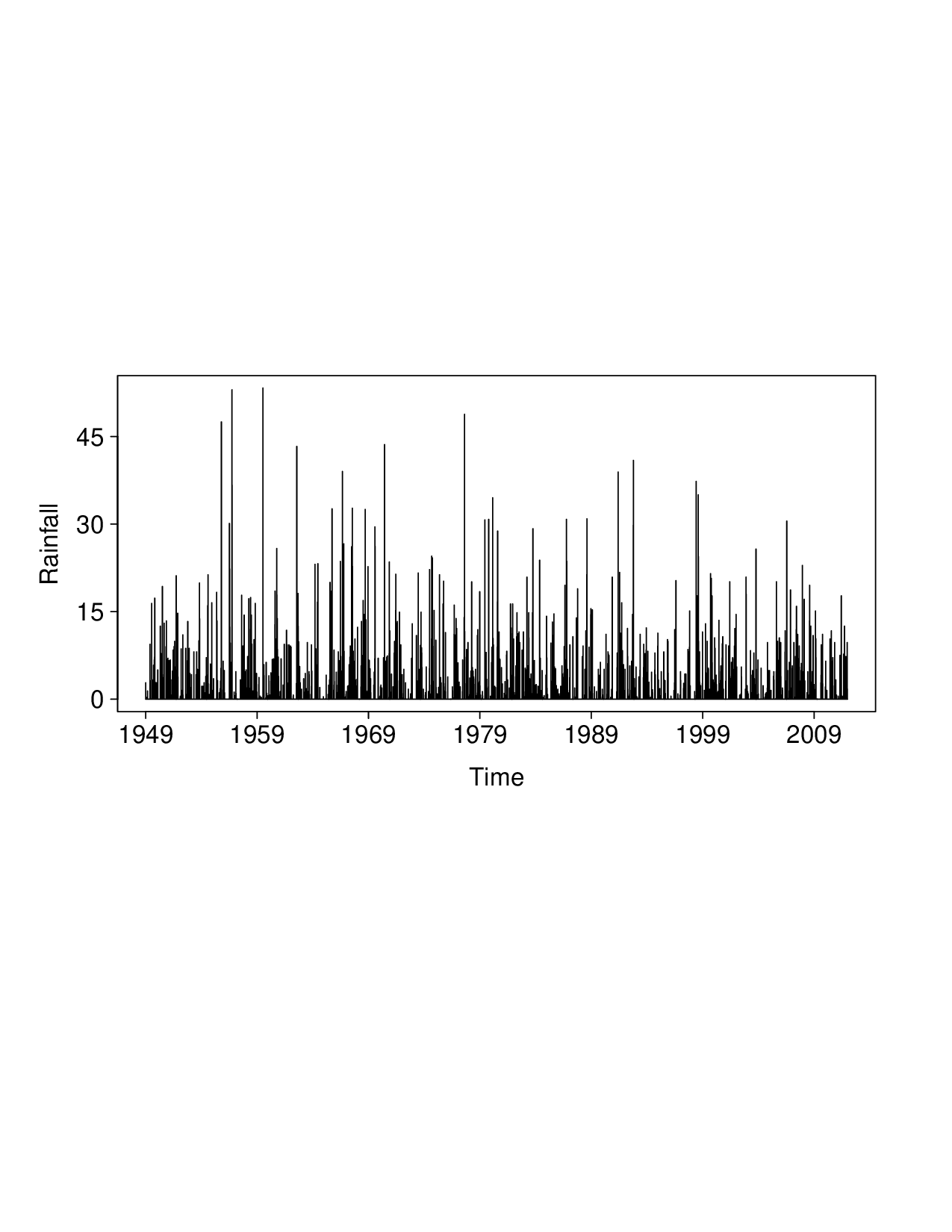}
  \caption{Heathrow rainfall exceedances.}
\label{hts} 
\end{figure}

\begin{table}[tbp]
\centering
\begin{tabular}{rrrrr}
  \hline
 & $\alpha$ & $\beta$ & $\rho$ & $\kappa$ \\ 
  \hline
Latent trawl & 6.33 & 20.12 & 0.27 & 12.18 \\ 
  G-LP & 6.43 & 20.64 & 0.70 & 12.25 \\ 
  WP & 6.30 & 19.94 & 0.78 & 12.15 \\ 
   \hline
\end{tabular}
\caption[Parameter estimates for exceedance models]{Estimated parameters for Heathrow rainfall data using the latent trawl model and the latent Markov chain model with G-LP and WP chains.}
\label{hresults}
\end{table}

Figure \ref{heind} shows estimates of the extremal index for the latent trawl and latent Markov chain models. These estimates are based on simulating time-series of length 1,000,000 from the fitted models, and then estimating $\theta$ as the inverse cluster length using the R package \verb+evd+, where a cluster is defined as ending when three consecutive exceedances fall below the threshold. 

These estimates indicate that the latent trawl and latent Markov chain models both manage to capture the main dependence structure in the extremes. The only visible difference occurs at high thresholds, where the G-LP model appears to underestimate $\theta$, i.e.~overestimate the dependence. This discrepancy at high levels could be explained by the fact that the G-LP model is asymptotically dependent, as shown in \cite{Bortot2013}, whereas the empirical estimates of $\theta$ indicate that the rainfall data is asymptotically independent. 

\begin{figure}[htbp]
  \centering
  \includegraphics[width=0.6\textwidth]{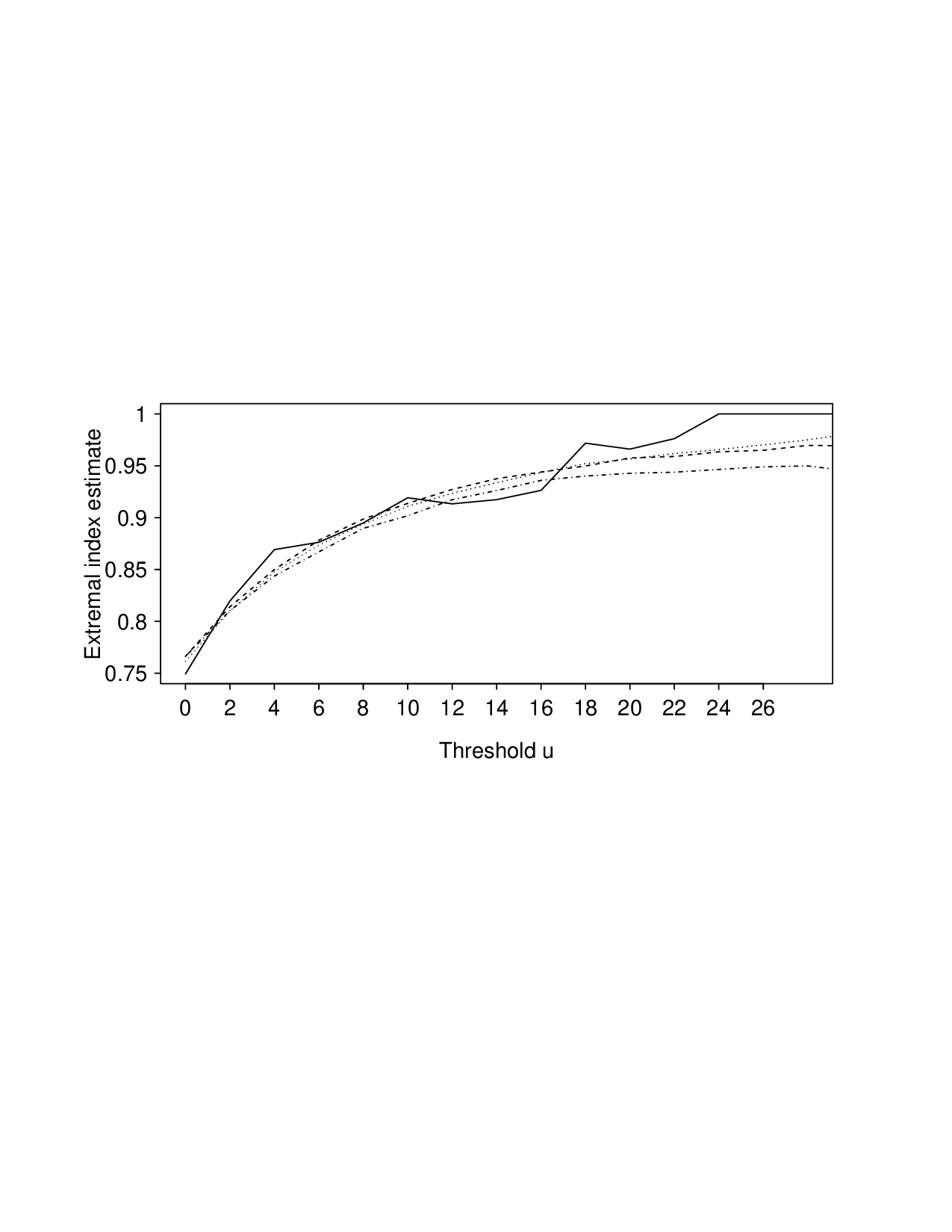}
  \caption[Estimated extremal index]{Estimated extremal index for empirical rainfall data (solid), latent trawl (dashed), G-LP (dash-dotted) and WP (dotted).}
  \label{heind}
\end{figure}

\subsection{Pollution data} \label{pollution_section}
We now consider a second application that illustrates the marginal transformation model introduced in Section \ref{transformed}. We use this model to analyse measurements of Ozone levels in Bloomsbury, London; specifically, the data gives the daily maximum of the 8-hour running mean, measured in units of $\mu g/m^3$ (microgrammes per cubic metre). They are obtained from the \cite{air}.  

Although there is evident seasonality in the original data, the effect diminishes significantly as the threshold increases, and so we do not adjust for seasonality in the extreme values for this application, but refinements of this approach could be considered in future work. We chose the threshold $u$ to be the 97 percentile of the original data ($81 \mu g/m^3$), resulting in the exceedances shown in Figure \ref{ozone}. 

\begin{figure}[tbp]
  \centering
  \includegraphics[width=0.55\textwidth]{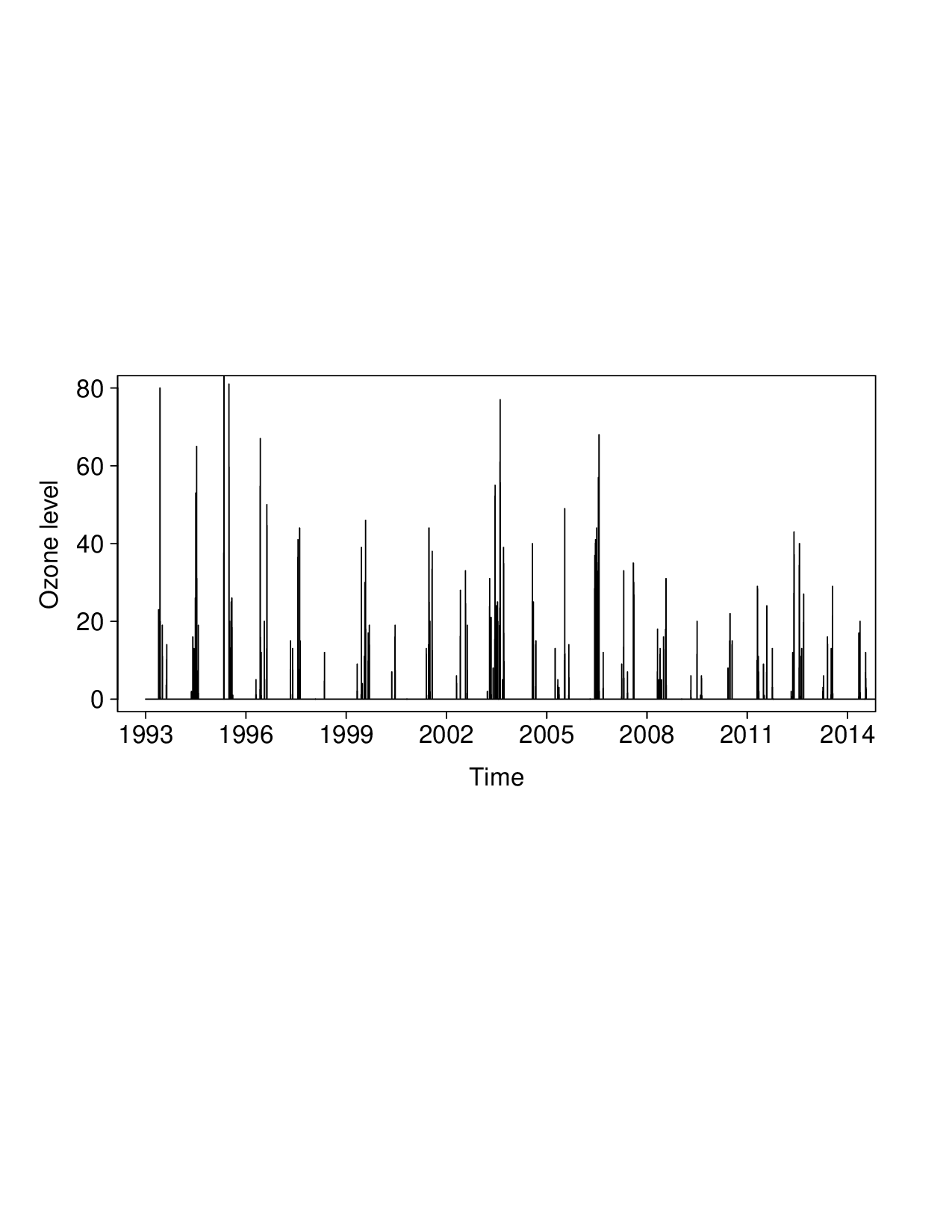}
  \caption{Ozone level exceedances}
  \label{ozone}
\end{figure}
 
Fitting the GPD distribution to the data directly indicates a negative $\xi$-value, i.e.~a finite upper bound, which cannot be captured by the standard model. Thus we use the transformed model described in section \ref{transformed}, which corresponds to taking the basic hierarchical model (either based on latent trawl or latent Markov chain as in \cite{Bortot2013}) and fixing $\alpha, \beta=1$, and then transforming the marginals to GPD$(\xi,\sigma)$.  

The models were fitted using pairwise likelihood with $\triangle=4$. Table \ref{ozest} shows the resulting estimates for the marginal transformed versions of the latent trawl and latent Markov chain models.

Figure \ref{pollution} shows estimates of the extremal index based on simulations of length 1,000,000, obtained by the same method as for the rainfall data above. The estimates of the extremal index indicate that the transformed latent Markov chain model does not accurately capture the extremal dependence structure in this example. In particular, the Warren process model appears to underestimate the extremal dependence (i.e.~$\theta$ is too high), whereas the opposite is the case for the GL-P model. 

\begin{table}[ht]
\centering
\begin{tabular}{rrrrr}
  \hline
 & $\xi$ & $\sigma$ & $\rho$ & $\kappa$ \\ 
  \hline
Latent trawl & -0.11 & 20.73 & 0.17 & 32.69 \\ 
G-LP & -0.04 & 21.09 & 0.56 & 32.19 \\ 
WP & -0.11 & 20.74 & 0.96 & 32.44 \\ 
   \hline
\end{tabular}
\caption[Parameter estimates for transformed exceedance models]{Estimated parameters for Ozone data using transformed versions of latent trawl model and latent Markov chain model with G-LP and WP chains.}
\label{ozest}
\end{table}

These results show that when the marginal parameters $\alpha,\beta$ are fixed in the latent layer, the latent Markov process model has less flexibility in the dependence structure than the latent trawl process model. This indicates that in the original hierarchical structure, the marginal parameters also have a strong influence on the dependence structure, and this contributes to the flexibility of the model. The transformed model has the advantage that its parameters are clearly interpretable as contributing to either the dependence or the marginal distribution. In our experience, the parameter estimation procedure for the transformed model also appears to be more reliable.  

\begin{figure}[tbp]
  \centering
  \includegraphics[width=0.6\textwidth]{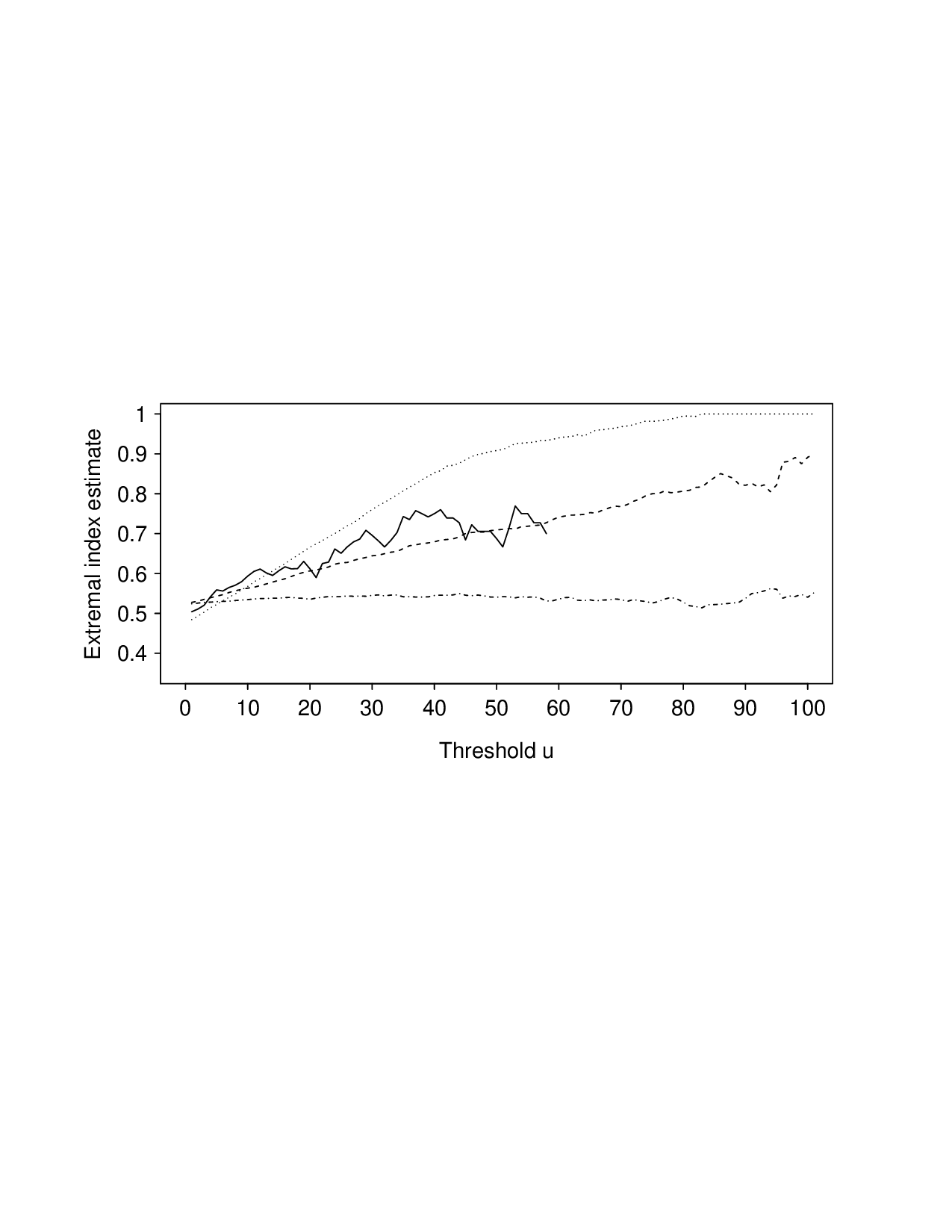}
  \caption[Estimated extremal index for transformed model ]{Estimated extremal index for empirical Ozone data (solid), and transformed versions of the latent trawl (dashed), G-LP (dash-dotted) and WP (dotted) models.}
\label{pollution} 
\end{figure}

\section{Conclusion}\label{conclusion}
In this paper we have investigated a new model for time series of extremes based on trawl processes. 
We constructed an extreme value model that uses the trawl process framework to obtain a flexible dependence structure. This was done by replacing the latent Markov chain in the setup of \cite{Bortot2013} with a trawl process. In contrast with other hierarchical models, this construction has the advantage of preserving the generalised Pareto distribution for the marginals, which is consistent with extreme value theory.  

We have also considered a modification to the original model structure that extends the parameter space, allowing for negative shape parameters. To evaluate the extremal dependence we have also developed an adapted version of the tail-dependence coefficient, which can be evaluated analytically for the trawl process model. 

The original and modified models were used to analyse two environmental time series, and compared with the latent Markov chain models of 
\cite{Bortot2013}. For the application using the original model, the results were very similar in terms of capturing the extremal dependence of the data. The advantage of using the latent trawl process was clearer when using the transformed model, where the trawl-based model performed better due to the added flexibility in the latent dependence structure.  

There are several aspects of the latent trawl model that could benefit from further investigation. For example, we have used a simple exponential trawl throughout this paper; it would be interesting to look at the result of using different parameterisations for the trawl set. Another possibility is to consider a model where the threshold $u$ is allowed to vary, and then letting the trawl set depend on $u$, which should result in a wider range of dependence levels across thresholds. 


\begin{appendix}
\section*{Acknowledgments}
R. C. Noven gratefully acknowledges financial support from the Grantham Institute for Climate Change, Imperial College London. A. E. D. Veraart acknowledges financial support by a Marie Curie FP7 Integration Grant within the 7th European Union Framework Programme. We thank the UK Meteorological Office, the British Atmospheric Data Centre and the London Air Quality Network for providing the data used.

\section{Appendix} \label{ltrawlappendix}
\subsection{Pairwise likelihood in the latent trawl model }\label{AppendixLike} 
The pairwise likelihood can be derived by using Proposition \ref{prop:jointchar} in the likelihood expression. Consider two  observations $x_i,x_j$ corresponding to two distinct time points $t_i=i,t_j=j$ for $i,j \in \{1, \dots, k\}$. Then we can write down the explicit partition of the trawl sets as follows:
$\{B_{i\setminus j}=A_{t_i} \setminus A_{t_j}, B_{i,j}=  A_{t_i} \cap A_{t_j},  B_{j\setminus i}= A_{t_j} \cap  A_{t_i}\}$. Furthermore, we note that $\leb(B_{i\setminus j}) =\leb(B_{j\setminus i})$ and we set $b_{i,j}=-\alpha \leb(B_{i,j})/\leb(A)$ and  $b_{i \setminus j}=-\alpha \leb(B_{i\setminus j})/\leb(A)=-\alpha \leb(B_{j\setminus i})/\leb(A)=b_{j \setminus i}$.
The joint likelihood of $x_i,x_j$ is given below for each of the four possible cases.

First, we consider the case of no exceedances, so $X_i=X_j=0$. Then
\begin{align*}
f(x_i,x_j)&=E_{\Lambda_i,\Lambda_j} \left[ (1-e^{-\kappa \Lambda_i})(1-e^{-\kappa \Lambda_j}) \right] 
&=1-2 \left(1+\frac{\kappa}{\beta} \right)^{-\alpha}+\left(1+\frac{\kappa}{\beta}\right)^{2b_{i\setminus j}}\left(1+\frac{2\kappa}{\beta}\right)^{b_{i,j}}.
\end{align*}
Second and third, we focus on the two cases of one exceedance. Here we consider $X_i>0,X_j=0$, the case $X_i=0,X_j>0$ is similar. We have 
\begin{align*}
f(x_i,x_j)&=E_{\Lambda_i,\Lambda_j} \left[ e^{-\kappa \Lambda_i} \Lambda_i e^{-\Lambda_i x_i} \left(1-e^{-\kappa \Lambda_j} \right) \right] \\
&=E_{\Lambda_i,\Lambda_j} \left[ \Lambda_i e^{-(\kappa+x_i)\Lambda_i} \right]-E_{\Lambda_i,\Lambda_j} \left[\Lambda_i e^{-\Lambda_i(\kappa+x_i) -\kappa \Lambda_j} \right] \\
&=-\left. \frac{\partial}{\partial u_i}  E_{\Lambda_i,\Lambda_j} \left[e^{-(u_i\Lambda_i+u_j \Lambda_j)} \right] \right|_{\substack{u_i=\kappa+x_i,\\u_j=0}}
+\left. \frac{\partial}{\partial u_i}  E_{\Lambda_i,\Lambda_j} \left[e^{-(u_i\Lambda_i+u_j \Lambda_j)} \right] \right|_{\substack{u_i=\kappa+x_i,\\ u_j=\kappa}} 
\\
&=\frac{\alpha}{\beta} \left(1+\frac{\kappa+x_i}{\beta} \right)^{-\alpha -1}
+\frac{1}{\beta} \left(1+\frac{\kappa+x_i}{\beta}\right)^{b_{i \setminus j}-1} \left(1+\frac{2 \kappa+x_i}{\beta} \right)^{b_{i,j}-1} \left(1+\frac{\kappa}{\beta}\right)^{b_{i \setminus j}} \\
&\quad \times \left(-\alpha \left(1+\frac{\kappa+x_i}{\beta} \right) +\frac{b_{i \setminus j}\kappa}{\beta} \right).
\end{align*}
Fourth, we study the situation of two exceedances, so $X_i,X_j>0$: 
\begin{align*}
f(x_i,x_j)&=E_{\Lambda_i,\Lambda_j} \left[\Lambda_i e^{-(\kappa+x_i)\Lambda_i} \Lambda_j e^{-(\kappa+x_j) \Lambda_j} \right]
= \frac{\partial}{\partial u_i} \frac{\partial}{\partial u_j}  \left. E_{\Lambda_i,\Lambda_j}  \left[e^{-u_i \Lambda_i -u_j \Lambda_j} \right] \right|_{\substack{u_i=\kappa+x_i,\\ u_j=\kappa+x_j}}\\
&=\frac{1}{\beta^2}  \left(1+\frac{2\kappa+x_i+x_j}{\beta} \right)^{b_{i,j}-2}
\left(1+\frac{\kappa+x_i}{\beta} \right)^{b_{i\setminus j}-1}
 \left(1+\frac{\kappa+x_j}{\beta} \right)^{b_{i\setminus j}-1}\\
 &\quad  \times
\left[ 
b_{i,j}(b_{i,j}-1)  \left(1+\frac{\kappa+x_i}{\beta} \right)\left(1+\frac{\kappa+x_j}{\beta} \right) 
+b_{i \setminus j}^2\left(1+\frac{2\kappa+x_i+x_j}{\beta} \right)^2 
\right.\\
& \quad \left. 
+  b_{i,j}b_{i\setminus j}\left(1+\frac{2\kappa+x_i+x_j}{\beta} \right)
\left\{
\left(1+\frac{\kappa+x_i}{\beta} \right)
+ \left(1+\frac{\kappa+x_j}{\beta} \right) \right\}   \right].
\end{align*}


\subsection{Proofs}\label{AProofs}
We present a fundamental result for integrals involving L\'evy bases, which we use repeatedly in our proofs. The result is derived in \citet[Proposition 2.6]{Rajput1989}, and stated here in our notation for a homogeneous L\'evy basis:
\begin{prop} \label{lbasisint}
Let $L$ be a homogeneous L\'evy basis and $f$ an $L$-integrable function, then
  \[E\left[\exp\left\{iu \int_S f(s) L(ds)\right\}\right]=\exp\left\{\int_S \cum{uf(s)}{L'}\right\} \, ds, \quad u \in \re.\]
\end{prop}

 \begin{proof}[Proof of Proposition \ref{prop:jointchar}]
We have 
$\sum_{j=1}^ku_j\Lambda_j= \sum_{j=1}^k u_j L(A_{t_j}) = \sum_{j=1}^k u_j \sum_{m: S_m\subset A_{t_j}} L(S_m) = \sum_{m=1}^{n_k} L(S_m)  u_{m}^+$.
Using the fact that the L\'{e}vy basis is independently scattered and Proposition \ref{lbasisint}, we get 
that $E\left(\exp\left(i\sum_{j=1}^k u_j \Lambda_j\right)\right) = E\left(\exp\left( i\sum_{m=1}^{n_k}  L(S_m)u_{m}^+ \right) \right)
= \prod_{m=1}^{n_k}  
E \left(\exp\left( i u_{m}^+ L(S_m) \right) \right)\\
= \prod_{m=1}^{n_k}    
\exp\left(\leb(S_m) C(u_{m}^+; L')\right)
=\exp\left(\sum_{m=1}^{n_k} \leb(S_k) C(u_{m}^+ ; L')\right)$. 
\end{proof}

\begin{proof}[Proof of Proposition \ref{acfexceedance}]
The expected value of $X$ is obtained by conditioning on the events $\{X=0\}$ and $\{X>0\}$, with probabilities given by \eqref{exprob}, and using that $\{X|X>0\}$ has a GPD$(\alpha,\beta+\kappa)$ distribution, which gives
$E[X]=(1+\kappa/\beta)^{-\alpha}(\beta+\kappa)/(\alpha-1)$,  for   $\alpha>1$.

The expectation of the product is more complicated, as it involves the joint density of the corresponding values of the latent trawl process:
\begin{align*}
E[X_0X_h]=0&+\int_{\re_+^2} x_0 x_h \left( \int_{\re_+^2} f(x_0,x_h|\lambda_0,\lambda_h) f(\lambda_0,\lambda_h) \, d \lambda_0 d \lambda_h \right) \, dx_0 \, dx_h\\
&=\int_{\re_+^2} x_0 x_h \left( \int_{\re_+^2}  \lambda_0 e^{-\lambda_0(\kappa+x_0)} \lambda_h e^{-\lambda_h(\kappa+x_h)}  f(\lambda_0,\lambda_h) \, d \lambda_0 \, d \lambda_h \right) \, dx_0 \, dx_h.
\end{align*}
Using Fubini's theorem to repeatedly exchange the order of integration gives
\begin{align*}
E[X_0X_h]&= \int_{\re_+^2} \lambda_0 \lambda_h  e^{-(\lambda_0 \kappa+\lambda_h \kappa)}  f(\lambda_0,\lambda_h) \left(\int_{\re_+^2} x_0 x_h e^{-(x_0 \lambda_0 +x_h \lambda_h)} \, dx_0 \, dx_h \right) \, d \lambda_0 \, d \lambda_h\\
&= \int_{\re_+^2} \lambda_0 \lambda_h  e^{-(\lambda_0 \kappa+\lambda_h \kappa)}  f(\lambda_0,\lambda_h) \left(\frac{1}{\lambda_0^2\lambda_h^2}\right) \, d \lambda_0 \, d \lambda_h\\
&=E_{\Lambda_0,\Lambda_h} \left[\frac{e^{-(\Lambda_0 \kappa+\Lambda_h \kappa)}}{\Lambda_0 \Lambda_h}\right]= \int_\kappa^\infty \int_\kappa^\infty E_{\Lambda_0,\Lambda_h} \left[e^{-(\Lambda_0 u_0+\Lambda_h u_h)}\right]  \, du_0 \, du_h .
\end{align*}
The joint expectation can now be found using the partitioning method described above.  Specifically, partition the trawl sets $A_0,A_h$ into disjoint sets 
$B_{0\setminus h}=A_0 \setminus A_h,  B_{0,h}=A_0 \cap A_h,  B_{h\setminus 0}=A_h \setminus A_0$ 
and define the slice variables $\{S_i=L(B_i)\}$, where $L$ is the L\'evy basis derived from \eqref{gmargins}, so in particular these variables are independent. Furthermore, we have $\Lambda_0 \stackrel{d}{=} S_{0,h}+S_{0\setminus h}$ and $\Lambda_h \stackrel{d}{=}S_{0,h}+S_{h \setminus 0}$. 
Let $b_i=-\alpha \leb(B_i)/\leb(A)$ be the  parameter of $S_i$, then we can rewrite the above expression as 
\begin{align*}
E[X_0X_h]&= \int_\kappa^\infty \int_\kappa^\infty E_{S_{0\setminus h},S_{0,h},S_{h\setminus 0}}\left[\exp\{-\left[u_0 S_{0\setminus h}+(u_0+u_h) S_{0,h}+u_hS_{h\setminus 0} \right]\}\right]  \, du_0 \, du_h \\
&= \int_\kappa^\infty \int_\kappa^\infty \left(1+\frac{u_0}{\beta} \right)^{b_{0\setminus h}} \left(1+\frac{u_0+u_h}{\beta} \right)^{b_{0,h}} \left(1+\frac{u_h}{\beta} \right)^{b_{h\setminus 0}}  \, du_0 \, du_h. 
\end{align*}
\end{proof}

\begin{proof}[Proof of Proposition \ref{propcondtail}]
By conditioning on the latent trawl process we obtain
\begin{align*}
&P(F_{2e}(X_0)>u_1, F_{2e}(X_h)>u_2, X_0>0,X_h>0)\\
=&P(X_0>F^{-1}_{2e}(u_1) , X_h>F^{-1}_{2e}(u_2), X_0>0,X_h>0)\\
=&\int \exp\left\{-\lambda_0 (F^{-1}_{2e}(u_1)+\kappa)\right\} \exp\left\{-\lambda_h (F^{-1}_{2e}(u_2)+\kappa)\right\} f(\lambda_0,\lambda_h) \, d \lambda_0 d \lambda_h\\
=&E_{\Lambda_0,\Lambda_h} \exp\left\{-\Lambda_0 (F^{-1}_{2e}(u_1)+\kappa) -\Lambda_h (F^{-1}_{2e}(u_2)+\kappa)\right\}.
\end{align*}
This expectation can be calculated by using the partition representation of the trawl process, similar to the calculation in Section \ref{autocovariance}. Specifically, we use the partition
$B_{0\setminus h}=A_0 \setminus A_h, B_{0,h}=A_0 \cap A_h,  B_{h\setminus 0}=A_h \setminus A_0$,
with corresponding slice variables $S_i$ for $i\in\{(0\setminus h),(0,h),(h\setminus 0)\}$.
Letting $b_i$ denote the  parameter of the slice $S_i$ (as defined  in the proposition), we have
\begin{align}\begin{split}\label{helpful}
&P(F_{2e}(X_0)>u_1, F_{2e}(X_h)>u_2, X_0>0,X_h>0)\\
&=E_{S_{0\setminus h},S_{0,h},S_{h\setminus 0}} \left[  \exp\left\{-S_{0\setminus h} \left(F^{-1}_{2e}(u_1)+\kappa \right)-S_{0,h} \left(F^{-1}_{2e}(u_1)+F^{-1}_{2e}(u_2)+2\kappa \right)\right. 
\right.\\
&\left. 
\left. -S_{h\setminus 0}\left(F^{-1}_{2e}(u_2)+\kappa \right)\right\} \right]  \\
=&\left(1+\frac{\kappa+F^{-1}_{2e}(u_1)}{\beta} \right)^{b_{0\setminus h}} \left(1+\frac{2\kappa +F^{-1}_{2e}(u_1)+F^{-1}_{2e}(u_2)}{\beta} \right)^{b_{0,h}} \left(1+\frac{\kappa+F^{-1}_{2e}(u_2)}{\beta} \right)^{b_{h\setminus 0}}.
\end{split}     
\end{align}
Setting $u_2=0$ in this expression and noting that $F_{2e}^{-1}(0)=0$ shows that 
\begin{align} \label{3e}
\begin{split}
&P(F_{2e}(X_0)>u_1, X_0>0,X_h>0)\\
=&\left(1+\frac{\kappa+F^{-1}_{2e}(u_1)}{\beta} \right)^{b_{0\setminus h}} \left(1+\frac{2\kappa+F^{-1}_{2e}(u_1)}{\beta} \right)^{b_{0,h}} \left(1+\frac{\kappa}{\beta} \right)^{b_{h\setminus 0}},
\end{split}
\end{align}
giving the denominator of $\varphi$, so that we get
\[\varphi(h,u_1,u_2)=\left(1+\frac{F_{2e}^{-1}(u_2)}{\beta+2\kappa+F_{2e}^{-1}(u_1)}\right)^{b_{0,h}}\left(1+\frac{F_{2e}^{-1}(u_2)}{\beta+\kappa}\right)^{b_{h\setminus 0}}.\]  
We can also write
\[F_{2e}(x)=1-\frac{P(X_0>x,X_0>0,X_h>0)}{P(X_0>0,X_h>0)},\]
where the probabilities can be derived from \eqref{3e} by the substitution $u_1 \rightarrow F_{2e}(x)$, resulting in
\begin{align}\label{F2e}
F_{2e}(x)=1-\left(1+\frac{x}{\beta+2\kappa}\right)^{b_{0,h}}\left(1+\frac{x}{\beta+\kappa}\right)^{b_{0\setminus h}}.
\end{align}
\end{proof}
\begin{lemma}\label{Fe2Lemma}
Consider the latent model as defined in Section \ref{setup}. Then the identity in \eqref{F2e_1} holds, i.e.~$P(X_0 \leq x\, | X_0>0, X_h>0)=P(X_h \leq x\, | X_0>0, X_h>0)$,  for  $h\in \mathbb{N}$.
\end{lemma}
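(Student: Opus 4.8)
The plan is to prove the stronger statement that the pair $(X_0,X_h)$ is \emph{exchangeable}, i.e.~$(X_0,X_h)\stackrel{d}{=}(X_h,X_0)$; the asserted identity then follows at once, since conditioning on the symmetric event $\{X_0>0,X_h>0\}$ is invariant under swapping the two coordinates, so that $P(X_0\leq x\mid X_0>0,X_h>0)=P(X_h\leq x\mid X_0>0,X_h>0)$.

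First I would reduce exchangeability of $(X_0,X_h)$ to exchangeability of the latent pair $(\Lambda_0,\Lambda_h)$. By the conditional independence \eqref{independence} together with the fact that the conditional kernel $f(\cdot\mid\lambda)$ in \eqref{basicmodel} is the same function for every index, the joint law of $(X_0,X_h)$ is obtained by integrating $f(x_0\mid\lambda_0)\,f(x_h\mid\lambda_h)$ against the joint law of $(\Lambda_0,\Lambda_h)$. Hence if $(\Lambda_0,\Lambda_h)\stackrel{d}{=}(\Lambda_h,\Lambda_0)$, then simultaneously relabelling the $\lambda$'s and the $x$'s leaves this integral invariant and yields $(X_0,X_h)\stackrel{d}{=}(X_h,X_0)$.

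The core step is therefore to show that $(\Lambda_0,\Lambda_h)$ is exchangeable, using the slice representation from the proof of Proposition \ref{acfexceedance}. Writing $\Lambda_0\stackrel{d}{=}S_0+S_{0,h}$ and $\Lambda_h\stackrel{d}{=}S_{0,h}+S_h$ with independent slices $S_i=L(B_i)$, I would observe that translation invariance of Lebesgue measure gives $\leb(A_0)=\leb(A_h)=\leb(A)$, so that $\leb(B_0)=\leb(A)-\leb(B_{0,h})=\leb(B_h)$. Consequently the shape parameters satisfy $b_0=b_h$, and hence $S_0$ and $S_h$ are identically distributed. Since $S_0,S_{0,h},S_h$ are independent, the vector $(S_0,S_{0,h},S_h)\stackrel{d}{=}(S_h,S_{0,h},S_0)$; applying the map $(a,b,c)\mapsto(a+b,\,b+c)$ to both sides sends $(\Lambda_0,\Lambda_h)$ to $(\Lambda_h,\Lambda_0)$ in distribution, which is the desired exchangeability.

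The only point requiring a little care is the mixed atom-plus-density structure in the first step, but this is harmless: the kernel $f(\cdot\mid\lambda)$ governs both the atom at zero and the exponential part identically for the two indices, so exchangeability of the latent pair transfers directly to $(X_0,X_h)$. I do not expect any genuine obstacle here, since the key equality $\leb(B_0)=\leb(B_h)$ is simply a restatement of the stationarity of the trawl process.
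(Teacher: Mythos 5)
Your proof is correct, but it takes a genuinely different route from the paper's. The paper argues by brute-force computation: it derives the closed form of $F_{2e}(x)=P(X_0\leq x\,|X_0>0,X_h>0)$ in \eqref{F2e}, then evaluates $P(X_h\leq x\,|X_0>0,X_h>0)$ by substituting $u_1=0$, $u_2=F_{2e}(x)$ into the explicit joint survivor formula \eqref{helpful}, and checks that the two products of power terms coincide, invoking $b_0=b_h$ (proved inside Proposition \ref{scalingofindex}). You instead prove the stronger structural fact that $(X_0,X_h)$ is exchangeable: the slice triple satisfies $(S_0,S_{0,h},S_h)\stackrel{d}{=}(S_h,S_{0,h},S_0)$ because the slices are independent (the basis is independently scattered) and $S_0\stackrel{d}{=}S_h$, which by \eqref{lseed} needs only $\leb(B_0)=\leb(B_h)$ — the same translation-invariance fact the paper uses for $b_0=b_h$; pushing this through the map $(a,b,c)\mapsto(a+b,b+c)$ and then through the common conditional kernel \eqref{basicmodel} with \eqref{independence} gives $(X_0,X_h)\stackrel{d}{=}(X_h,X_0)$, from which the identity \eqref{F2e_1} is immediate since the conditioning event is symmetric. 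One small point of care: you write the slice representation with marginal equalities in distribution, but your argument needs the joint version $(\Lambda_0,\Lambda_h)\stackrel{d}{=}(S_0+S_{0,h},\,S_{0,h}+S_h)$; this does hold (in fact almost surely, by finite additivity of the random measure on the disjoint slices), and the paper relies on the same joint representation in its expectation calculations, so this is a presentational nicety rather than a gap. What each approach buys: the paper's computation is not wasted effort, since the explicit formula for $F_{2e}$ is needed anyway in Proposition \ref{propcondtail} and Theorem \ref{thm_phi}; your argument is shorter, avoids all Laplace-transform algebra, and is strictly more general — it nowhere uses the Gamma form of the seed \eqref{gmargins} or the exponential kernel, so it shows the lemma holds for any homogeneous L\'evy basis and any common conditional kernel, a robustness the paper's proof does not make visible.
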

\begin{proof}[Proof of Lemma \ref{Fe2Lemma}]
The proof of Proposition \ref{propcondtail} implies that (using the same notation as above) 
$F_{2e}(x)=P(X_0 \leq x\, | X_0>0, X_h>0) =1-\left(1+\frac{x}{\beta+2\kappa}\right)^{b_{0,h}}\left(1+\frac{x}{\beta+\kappa}\right)^{b_{0\setminus h}}$.
Similarly, we compute
$P(X_h \leq x\, | X_0>0, X_h>0) 
=1-P(X_h > x\, | X_0>0, X_h>0)$,
where  $P(X_h > x\, | X_0>0, X_h>0)=P(F_{2e}(X_h) > F_{2e}(x)\, | X_0>0, X_h>0)$. The latter expression can be derived from \eqref{helpful} by setting $u_2=F_{2e}(x)$ and $u_1=0$. Recall that   $F_{2e}^{-1}(0)=0$. Hence (using that $b_{0\setminus h}=b_{h\setminus 0}$ as we will show  in the proof of Proposition \ref{scalingofindex} below)
\begin{align} \label{3ee}
\begin{split}
&P(F_{2e}(X_h)>u_2, X_0>0,X_h>0)\\
&=\left(1+\frac{\kappa}{\beta} \right)^{b_{0\setminus h}} \left(1+\frac{2\kappa +F^{-1}_{2e}(u_2)}{\beta} \right)^{b_{0,h}} \left(1+\frac{\kappa+F^{-1}_{2e}(u_2)}{\beta} \right)^{b_{h\setminus 0}}\\
&=\left(1+\frac{\kappa}{\beta} \right)^{b_{0\setminus h}} \left(1+\frac{2\kappa +x}{\beta} \right)^{b_{0,h}} \left(1+\frac{\kappa+x}{\beta} \right)^{b_{0\setminus h}},
\end{split}
\end{align}
and
\begin{align*}
P(X_0>0,X_h>0)=\left(1+\frac{\kappa}{\beta} \right)^{2b_{0\setminus h}} \left(1+\frac{2\kappa }{\beta} \right)^{b_{0,h}}.
\end{align*}
Hence, we conclude that 
\begin{align*}
P(X_h \leq x\, | X_0>0, X_h>0)&=1-\left(1+\frac{x}{\beta+2\kappa}\right)^{b_{0,h}}\left(1+\frac{x}{\beta+\kappa}\right)^{b_{0\setminus h}}\\
&=P(X_0 \leq x\, | X_0>0, X_h>0).
\end{align*}
\end{proof}
\begin{proof}[Proof of Proposition \ref{scalingofindex}]
Using the representation result from Proposition \ref{propcondtail}, we can immediately read off that  $\varphi(h,u_1,0)=1$. Moreover, from the definition of $F_{2e}$, we observe that 
$\varphi(h,0,u_2)=1-u_2$ for all $0\leq u_2\leq 1$ is equivalent to $\varphi(h,0,F_{2e}(x))=1-F_{2e}(x)$ for all $x\in \re$. The latter statement can be easily computed as follows. We now use representation \eqref{F2e} to deduce that, for any $x\in \re$, we have 
\begin{align*}
\varphi(h,0,F_{2e}(x))&= 
\left(1+\frac{x}{\beta+2\kappa}\right)^{b_{0,h}}\left(1+\frac{x}{\beta+\kappa}\right)^{b_{h\setminus 0}},\\
 1-F_{2e}(x) &=\left(1+\frac{x}{\beta+2\kappa}\right)^{b_{0,h}}\left(1+\frac{x}{\beta+\kappa}\right)^{b_{0\setminus h}}.
\end{align*}
Hence, we get the identity as soon as $b_{0\setminus h} = b_{h\setminus 0}$ which is equivalent to $\leb(A_0\setminus A_h) =\leb(A_h\setminus A_0)$. Since the trawl is constructed via translation, we have indeed that 
$\leb(A_0)=\leb(A_0\cap A_h)+\leb(A_0\setminus A_h)
= \leb(A_0\cap A_h)+\leb(A_h\setminus A_0)=\leb(A_h)$, which implies that $b_{0\setminus h} = b_{h\setminus 0}$.
\end{proof}
\begin{proof}[Proof of Theorem \ref{thm_phi}]
Note that
\begin{align}\label{varphi}
\varphi(h,u_1,u_2)&=\left(1+\frac{F_{2e}^{-1}(u_2)}{\beta+2\kappa+F_{2e}^{-1}(u_1)}\right)^{b_{0,h}}\left(1+\frac{F_{2e}^{-1}(u_2)}{\beta+\kappa}\right)^{b_{h\setminus 0}}.
\end{align}
Since $\lim_{u \uparrow 1}F_{2e}^{-1}(u)=\infty$, the first term in in \eqref{varphi} converges to $2^{b_{0,h}}$ and the  second term in \eqref{varphi} tends to 0 and hence we can deduce that $\lim_{u_1\uparrow 1, u_2\uparrow 2}\varphi(h,u_1,u_2)=0$.
\end{proof}


\end{appendix}

\end{document}